\documentclass[showpacs,preprintnumbers,twocolumn]{revtex4}

\usepackage[centertags]{amsmath}
\usepackage{amsfonts}
\usepackage{amssymb}
\usepackage{amsthm}
\usepackage{newlfont}
\usepackage{graphicx}

\newtheorem{thm}{Theorem}[section]
\newtheorem{lem}[thm]{Lemma}
\newtheorem{conj}[thm]{Conjecture}
\newtheorem{defn}[thm]{Definition}
\newtheorem{cor}[thm]{Corollary}
\newtheorem{obs}[thm]{Observation}
\newtheorem{prop}[thm]{Proposition}

\begin{document}

\title{Symmetric
extendibility of quantum states}
\author{Marcin L. Nowakowski}
\affiliation{Faculty of Applied Physics and Mathematics,
~Gdansk University of Technology, 80-952 Gdansk, Poland}
\affiliation{National Quantum Information Center of Gdansk, Andersa 27, 81-824 Sopot, Poland}

\pacs{03.67.-a, 03.67.Hk}

\begin{abstract}
Studies on symmetric extendibility of quantum states become especially important in a context of analysis of one-way quantum measures of
entanglement, distilabillity and security of quantum protocols. In this paper we analyse composite systems containing a symmetric extendible part with a
particular attention devoted to one-way security of such systems. Further, we introduce
a new one-way monotone based on the best symmetric approximation of quantum state. We underpin those results with geometric observations on structures of multi-party settings which posses in sub-spaces substantial symmetric extendible components. Finally, we state a very important conjecture linking symmetric-extendibility with one-way distillability and security of all quantum states analyzing behavior of private key in neighborhood of symmetric extendible states.
\end{abstract}

\maketitle

\section{Introduction}
Recent years have proved a great interest of symmetric extendibility concept showing its usability in quantum communication theory, especially in domain of one-way communication. A natural relation between monogamy of entanglement and symmetric extendibility concept was established \cite{D1,D2,T1} with an important application to analysis of Bell inequalities for multipartite settings where some of the parties possess the same sets of measurement settings. Further, the concept is central for studies of one-way quantum channel capacities, entanglement distillability and private key analysis deriving new upper bounds on these communication rates \cite{MNPH,Lutk1, Lutk2, Lutk3, Lutk4, MNPH2}. It seems also that symmetric extendibility is fundamental for studies on recovery and entanglement breaking channels including its neighborhood \cite{KLi} as well as for such measures like squashed entanglement and quantum discord \cite{MPiani} or analysis of directed communication in 1D/2D spin chains \cite{Renner2}. The aforementioned applications sufficiently prove importance of the notion for
quantum communication theory. The challenge for the present quantum information theory in domain of one-way communication is to better understand behavior of all quantum states in the region of non-symmetric extendibility and in particular in a region of non-positive coherent information \cite{Devetak05} where no known one-way protocol for distillation of entanglement and private key exists. We believe that the following paper will support these studies.
In this paper we provide some new observations about behavior of symmetric states under action of one-way LOCC operations and remind important facts about composite systems containing a symmetric extendible part. Further, we analyze a concept of locking non-symmetric extendibility with its application for security of quantum states asking abut behavior of states assisted by symmetric extendible part. Moreover, we derive a very important link between all two-qubit states not being extendible and one-way entanglement distillation and privacy, analyzing also behavior of private key in neighborhood of symmetric extendible states. In this context, we verified non-symmetric extendible two-qubit Werner states in the region of non-positive coherent information \cite{Winter2} putting an important question about their distillability or existence of one-way bound entanglement.
 We also give a formalized structure to some natural intuitions about nature of composite systems and its reference to k-extendible states.

\section{Symmetric extendible states}
Particularly symmetric extendibility \cite{D1, D2, T1} of a given bipartite state
$\rho_{AB}\in B(\cal{H}_A\otimes\cal{H}_B)$ denotes that there
exists a tripartite state $\rho_{ABE}\in
B(\cal{H}_A\otimes\cal{H}_B\otimes\cal{H}_B)$ invariant due to
permutation of B and E part, namely, if:
\begin{equation}
  P=\sum_{ijk}|ijk\rangle\langle ikj|
\end{equation}
then $P\rho_{ABE}P^{\dagger}=\rho_{ABE}$ and $Tr_{E}\rho_{ABE}=\rho_{AB}=\rho_{AE}$.

By $0$-extendible states we will denote those that are not
symmetrically extendible at all. One could note that it might
be useful to partition the set of all symmetric extendible states $SE$ by relation of
$k$-extendibility. If $\mathcal{S}_{k}$ denotes a convex set \cite{MNPH} of all
states being $k$-extendible, there holds the natural inclusion
relation [Fig. \ref{Fig 1}]:
\begin{equation}\label{incl}
  \mathcal{S}_{1}\supset \mathcal{S}_{2}\supset \ldots \supset
  \mathcal{S}_{k}
\end{equation}

\begin{figure}
\includegraphics[width=9cm, height=5cm]{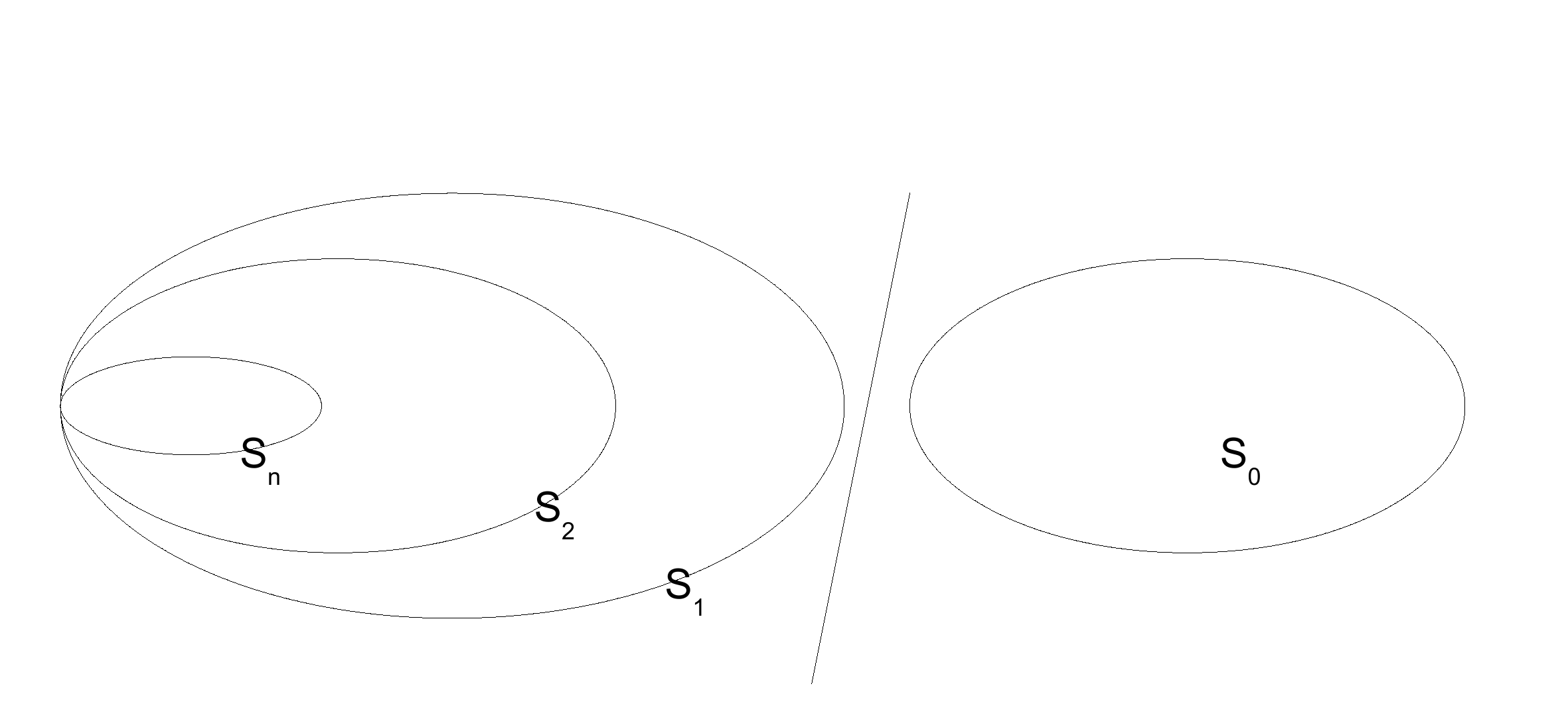}
 \caption[Fig1.]{The space of quantum states can be decomposed by the relation of
 k-extendibility. $\mathcal{S}_{0}$ denotes the set of all non-extendible states whereas
 $\mathcal{S}_{n}$ the set of states having n-rank symmetric extensions.}
  \label{Fig 1}
\end{figure}

Of a great importance is the fact that for a given $\rho_{AB}\in
SE$ there may exist different k-rank symmetric extensions so
that the property is not unique and one could represent the set of
appropriate symmetric extensions by means of equivalence classes
given by the relation $B(\cal{H}_A\otimes\cal{H}_B)\ni
\rho_{AB}\sim \rho \in
B(\cal{H}_A\otimes\cal{H}_B^{\otimes(k+1)})$ if and only if $\rho$
is a k-rank symmetric extension of state $\rho_{AB}$. As the
trivial example note that for
$\rho_{AB}=\frac{1}{2}(|00\rangle\langle 00|+|11\rangle\langle
11|)$ at least the following are extensions of rank one:
$|GHZ\rangle=\frac{1}{\sqrt{2}}(|000\rangle+|111\rangle)$ and
$\rho=\frac{1}{2}(|000\rangle\langle 000| + |111\rangle\langle
111|)$.

For $k$-extendible states it might be useful to introduce an
operator swapping $k+1$ particles:
\begin{equation}
  P_{\pi}=\sum_{i_{1}i_{2}\ldots i_{k+1}}|i_{1}i_{2}\ldots i_{k+1}\rangle
  \langle \pi(i_{1})\pi(i_{2})\ldots \pi(i_{k+1})|
\end{equation}
where swapping is performed for an arbitrary permutation $\pi$.
Hence, there holds a general relation for $k$-extendibility that
explicitly derives set $\mathcal{S}_{k}$: $\forall_{\pi}\;
P_{\pi}\rho_{AB_{1}\ldots B_{k}B_{k+1}
}P_{\pi}^{\dagger}=\rho_{AB_{1}\ldots B_{k}B_{k+1} }$.
\\
\\
\textbf{Example 1.}
As a 1-extendible state we present
$\rho_{AB}=\frac{1}{3}|00\rangle\langle00|+\frac{2}{3}|\Phi_{+}\rangle\langle\Phi_{+}|$
that obviously possess rank-1 symmetric purification to W-state
$|W\rangle=\frac{1}{\sqrt{3}}(|001\rangle+|010\rangle+|100\rangle)$.

We could derive for this example a general form of $n$-extendible
state inheriting from $W$-like n-partite states:
\begin{equation}\label{WState}
  \Upsilon_{AB}(n)=\frac{n}{n+2}|00\rangle\langle
  00|+\frac{2}{n+2}|\Phi_{+}\rangle\langle\Phi_{+}|
\end{equation}
where
$|\Phi_{+}\rangle=\frac{1}{\sqrt{2}}(|01\rangle+|10\rangle)$.
Interestingly one can simply show that for e.g. $GHZ$-like
n-partite states being a maximal extension of
$\rho_{AB}=\frac{1}{2}(|00\rangle\langle00|+|11\rangle\langle11|)$
there holds $\rho_{AB}=\lim_{n\rightarrow\infty}\rho_{AB}(n)$ that
is in agreement with theorems \cite{T1} stating implicitly that
$\rho$ is separable if and only if is $\infty$-extendible (where $\rho_{AB}(n)$ is derived
from n-partite GHZ state).
\\

Following we present two different approaches to the problem of
representation of symmetric extensions in extended space. The
first approach is widely used in previous papers (see \cite{D1,
D2, T1}) on extendibility of quantum states. Every bipartite state
$\rho_{AB}\in B(\mathcal{H}_{A}\otimes \mathcal{H}_{B})$ where
$\mathcal{H}_{A}=\mathbb{C}^{m}$ and
$\mathcal{H}_{B}=\mathbb{C}^{n}$  can be represented in the basis
of generators of group $SU(m)\otimes SU(n)$ as follows:

\begin{eqnarray}\label{rep1}
\rho_{AB}&=&\gamma\sigma_{A}^{0}\otimes \sigma_{B}^{0} +
\sum_{i}\alpha_{i}\sigma_{A}^{0}\otimes\sigma_{B}^{i} + \\
&+& \sum_{j}\beta_{j}\sigma_{A}^{j}\otimes \sigma_{B}^{0} +
\sum_{i,j\neq 0}\zeta_{ij}\sigma_{A}^{i}\otimes \sigma_{B}^{j}
\nonumber
\end{eqnarray}
where $\sigma_{B}^{i}$ are basis elements of $SU(n)$ and
respectively $\sigma_{A}^{i}$ for $SU(m)$.
Elements of the basis satisfy relations:
 $Tr[\sigma_{S}^{i}\sigma_{S}^{j}]=\eta_{S}\delta_{ij}$ and
$Tr[\sigma_{S}^{i}]=\delta_{1i}$  with $S=\{A, B\}$. Therefore, one could derive a
general representation of all 1-rank symmetric extensions:
\begin{eqnarray}
\rho_{AB_{1}B_{2}}&=&\sum_{i,j\neq
0}\alpha_{ij}\sigma_{A}^{i}\otimes \sigma_{B_{1}}^{j}\otimes
\sigma_{B_{2}}^{j} + \\ &+& \sum_{ijk,
j<k}\beta_{ijk}(\sigma_{A}^{i}\otimes\sigma_{B_{1}}^{j}\otimes\sigma_{B_{2}}^{k}
+\sigma_{A}^{i}\otimes\sigma_{B_{1}}^{k}\otimes\sigma_{B_{2}}^{j})
\nonumber
\end{eqnarray}
and further, for general case of $k$-extendibility:

\begin{eqnarray}
& &\rho_{AB_{1}\ldots B_{k+1}}=\sum_{i,j\neq
0}\alpha_{ij}\sigma_{A}^{i}\otimes \sigma_{B_{1}}^{j}\otimes \ldots \otimes\sigma_{B_{k+1}}^{j} +\\
& &\sum_{i,i_{1}<i_{2}<\ldots<
i_{k+1}}\sum_{\sigma}\beta_{ii_{1}\ldots i_{k+1}
}\sigma_{A}^{i}\otimes \sigma_{B_{1}}^{\sigma(i_{1})}\otimes
\ldots \otimes\sigma_{B_{k+1}}^{\sigma(i_{k+1})}\nonumber
\end{eqnarray}

The latter approach that we will utilize in this paper is based on
partitioning a space on which Bobs' states operate into symmetric
and antisymmetric subspace.

Following we will prove some lemmas about Schmidt decomposition of k-rank pure symmetric states that supports in course of the paper more powerful theorem about properties of symmetric extendible states.

\begin{lem}\label{Schmidt1}
Let $\rho_{AB_{1}} \in B(\mathcal{H}_{A}\otimes \mathcal{H}_{B_{1}})$ be symmetrically extendible to a k-rank pure extension $\Psi_{AB_{1}\ldots B_{k+1}} \in \mathcal{H}_{A}\otimes \mathcal{H}^{\otimes k+1}_{B_{1}}$ then there exists a  Schmidt decomposition:
\begin{equation}
\Psi_{AB_{1}\ldots B_{k+1}}=\sum_{i}\alpha_{i}|\phi_{i}^{AB_{1}}\rangle |\psi_{i}^{B_{2\ldots k+1}}\rangle
\end{equation}
where $\{|\phi_{i}^{AB_{1}}\rangle\}, \{|\psi_{i}^{B_{2\ldots k+1}}\rangle\}$ are orthonormal sets and $|\psi_{i}^{B_{2\ldots k+1}}\rangle \in Sym^{k}\bigoplus Asym^{k} (\mathcal{H}_{B_{1}})$.

\end{lem}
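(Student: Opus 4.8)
\emph{Proof strategy.} The plan is to promote the permutation invariance of the mixed extension to a symmetry of the pure vector $|\Psi\rangle := |\Psi_{AB_{1}\ldots B_{k+1}}\rangle$ itself, and then to let the support of its reduction onto $B_{2}\ldots B_{k+1}$ dictate the structure of the Schmidt vectors.

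First I would argue that $|\Psi\rangle$ (not merely the projector) is permutation-symmetric up to a phase. Since $\Psi_{AB_{1}\ldots B_{k+1}}$ is a pure $k$-extension, $P_{\pi}\,|\Psi\rangle\langle\Psi|\,P_{\pi}^{\dagger}=|\Psi\rangle\langle\Psi|$ for every permutation $\pi$ of the $k+1$ Bob subsystems; two unit vectors generating the same rank-one projector coincide up to a phase, so $P_{\pi}|\Psi\rangle=c_{\pi}|\Psi\rangle$ with $|c_{\pi}|=1$. The map $\pi\mapsto c_{\pi}$ is a homomorphism $S_{k+1}\to U(1)$, and since all transpositions have order two and are mutually conjugate, it must be either the trivial or the sign character. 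Writing $|\Psi\rangle=\sum_{a}|a\rangle_{A}\otimes|v_{a}\rangle$ with $|v_{a}\rangle\in\mathcal{H}_{B_{1}}^{\otimes k+1}$, one then gets $P_{\pi}|v_{a}\rangle=c_{\pi}|v_{a}\rangle$ for all $a$, i.e. every $|v_{a}\rangle$ lies in $Sym^{k+1}(\mathcal{H}_{B_{1}})$, or every $|v_{a}\rangle$ lies in $Asym^{k+1}(\mathcal{H}_{B_{1}})$. Hence $|\Psi\rangle\in\mathcal{H}_{A}\otimes Sym^{k+1}(\mathcal{H}_{B_{1}})$ or $|\Psi\rangle\in\mathcal{H}_{A}\otimes Asym^{k+1}(\mathcal{H}_{B_{1}})$.

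Next I would invoke the elementary inclusions $Sym^{k+1}(\mathcal{H}_{B_{1}})\subseteq\mathcal{H}_{B_{1}}\otimes Sym^{k}(\mathcal{H}_{B_{1}})$ and $Asym^{k+1}(\mathcal{H}_{B_{1}})\subseteq\mathcal{H}_{B_{1}}\otimes Asym^{k}(\mathcal{H}_{B_{1}})$, valid because a fully (anti)symmetric tensor is in particular (anti)symmetric in its last $k$ slots. Denoting by $W$ the subspace $Sym^{k}(\mathcal{H}_{B_{1}})$ in the first case and $Asym^{k}(\mathcal{H}_{B_{1}})$ in the second, we obtain $|\Psi\rangle\in\mathcal{H}_{A}\otimes\mathcal{H}_{B_{1}}\otimes W$, so the reduced state $\rho_{B_{2}\ldots B_{k+1}}=\mathrm{Tr}_{AB_{1}}|\Psi\rangle\langle\Psi|$ is supported inside $W$. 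Taking now the ordinary Schmidt decomposition of $|\Psi\rangle$ across the cut $AB_{1}\,|\,B_{2}\ldots B_{k+1}$, the vectors $|\psi_{i}^{B_{2\ldots k+1}}\rangle$ are precisely the eigenvectors of $\rho_{B_{2}\ldots B_{k+1}}$ with nonzero eigenvalue, hence they all lie in $W\subseteq Sym^{k}(\mathcal{H}_{B_{1}})\oplus Asym^{k}(\mathcal{H}_{B_{1}})$, which is the assertion (in fact one gets the stronger conclusion that they all sit in the \emph{same} summand).

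The step that requires genuine care is the first one: the phase argument must be made watertight, the crucial outcome being that $|\Psi\rangle$ lands in the symmetric or the antisymmetric isotypic component of the $k+1$ copies and in no mixed-symmetry component — this is exactly what would fail if one only knew invariance under a single transposition, and it is what makes the proper inclusion into $Sym^{k}\oplus Asym^{k}$ nontrivial for $k\geq 3$. The remaining steps are routine manipulations with subspace inclusions, so I do not expect an obstacle there; the case $k=1$ is degenerate (then $Sym^{1}=Asym^{1}=\mathcal{H}_{B_{1}}$) and deserves only a passing remark.
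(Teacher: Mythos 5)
Your proof is correct and follows essentially the same route as the paper's: permutation invariance of the pure extension forces $P_{\pi}|\Psi\rangle=c_{\pi}|\Psi\rangle$ with $\pi\mapsto c_{\pi}$ a one-dimensional character of the symmetric group, which pushes the $B_{2}\ldots B_{k+1}$ Schmidt vectors into the symmetric or antisymmetric subspace. You are in fact more careful than the paper, which simply asserts the $\pm$ phase without the character argument and works directly with permutations fixing $B_{1}$; your detour through $Sym^{k+1}/Asym^{k+1}$ of all $k+1$ copies also makes explicit that every Schmidt vector lands in the \emph{same} summand.
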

\begin{proof}
Since:
\begin{equation*}
\forall_{\pi} \; I_{AB_{1}} \otimes P_{\pi} \Psi_{AB_{1}\ldots B_{k+1}} = \pm \Psi_{AB_{1}\ldots B_{k+1}}
\end{equation*}
where $P_{\pi}$ operates only on $B_{2}\ldots B_{k+1}$ of the system, then $\sum_{i}\alpha_{i}|\phi_{i}^{AB_{1}}\rangle P_{\pi} |\psi_{i}^{B_{2\ldots k+1}}\rangle =\pm \sum_{i}\alpha_{i}|\phi_{i}^{AB_{1}}\rangle|\psi_{i}^{B_{2\ldots k+1}}\rangle$. However, since the state is a symmetric extension, the above Schmidt decomposition is invariant due to any permutation on B-part and $|\phi_{i}^{AB_{1}}\rangle$ indexes uniquely the $|\psi_{i}^{B_{2\ldots k+1}}\rangle$ states so $P_{\pi}$
transforms $|\psi_{i}^{B_{2\ldots k+1}}\rangle$ onto itself. Therefore, the second multiplicands of Schmidt decomposition represent either symmetric or antisymmetric orthonormal states.
\end{proof}

When in \cite{Lutk1} spectral conditions for 1-rank symmetric extensions
were stated, following we derive general statement about spectral conditions for
k-extendible states basing on the observation about decomposition of
symmetric states.

\begin{obs}\label{ObservationSym}
Every pure normalized state $|\Psi\rangle \in Sym^{k+1}\bigoplus Asym^{k+1} (\mathcal{H}_{B_{1}})$ of k+1-partite system can be decomposed to the following Schmidt form:
\[
\forall_{1<l<k} |\Psi\rangle=\sum_{i}|\phi_{i}^{B_{1}\ldots B_{l}}\rangle |\phi_{i}^{B_{l+1}\ldots B_{k+1}}\rangle
\]
where the multiplicands form respectively symmetric or antisymmetric orthonormal sets.
\end{obs}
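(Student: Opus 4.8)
The plan is to reduce the statement to the two extreme cases in which $|\Psi\rangle$ is \emph{fully} symmetric or \emph{fully} antisymmetric under the permutation group $S_{k+1}$ of the $k+1$ Bob systems, and then to pin down the supports of the two reduced states across the cut $B_{1}\ldots B_{l}$ versus $B_{l+1}\ldots B_{k+1}$. Fix a product basis and write $|\Psi\rangle=\sum c_{x_{1}\ldots x_{k+1}}|x_{1}\ldots x_{k+1}\rangle$. If $|\Psi\rangle\in Sym^{k+1}$ then the coefficient array $c$ is invariant under every permutation of its indices; in particular it is symmetric in $x_{1},\ldots,x_{l}$ alone and, separately, in $x_{l+1},\ldots,x_{k+1}$ alone. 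If $|\Psi\rangle\in Asym^{k+1}$ the same holds with the sign of the permutation.

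First I would show that every vector of the form $(I_{B_{1}\ldots B_{l}}\otimes\langle\eta|_{B_{l+1}\ldots B_{k+1}})|\Psi\rangle$ lies in $Sym^{l}(\mathcal{H}_{B_{1}})$ in the symmetric case, and in $Asym^{l}(\mathcal{H}_{B_{1}})$ in the antisymmetric case. Its components are the partial contractions $\sum_{x_{l+1}\ldots x_{k+1}} c_{x_{1}\ldots x_{k+1}}\,\overline{\eta_{x_{l+1}\ldots x_{k+1}}}$, and since $c$ is (anti)symmetric in $x_{1},\ldots,x_{l}$ so is this contraction. Letting $|\eta\rangle$ run over a basis of $\mathcal{H}_{B_{l+1}\ldots B_{k+1}}$, the reduced state $\rho_{B_{1}\ldots B_{l}}$ is a sum of operators $|v\rangle\langle v|$ with each $|v\rangle$ of this form, so $\mathrm{supp}(\rho_{B_{1}\ldots B_{l}})\subseteq Sym^{l}$ (resp. $Asym^{l}$); by the symmetric argument $\mathrm{supp}(\rho_{B_{l+1}\ldots B_{k+1}})\subseteq Sym^{k+1-l}$ (resp. $Asym^{k+1-l}$).

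Then I would close the argument with the standard fact that in any Schmidt decomposition $|\Psi\rangle=\sum_{i}\sqrt{\lambda_{i}}\,|\phi_{i}^{B_{1}\ldots B_{l}}\rangle|\phi_{i}^{B_{l+1}\ldots B_{k+1}}\rangle$ across this cut, the left Schmidt vectors form an orthonormal basis of $\mathrm{supp}(\rho_{B_{1}\ldots B_{l}})$ and the right ones of $\mathrm{supp}(\rho_{B_{l+1}\ldots B_{k+1}})$; combined with the inclusions above, both families are orthonormal subsets of the appropriate symmetric or antisymmetric subspace, which is the assertion. The point I expect to be most delicate --- and the reason I route through supports rather than through a direct permutation-covariance argument on the Schmidt vectors as in Lemma~\ref{Schmidt1} --- is the possible degeneracy of the Schmidt coefficients $\lambda_{i}$: when several coincide the Schmidt basis is not unique, so a bare uniqueness argument fails, whereas the support characterisation is basis-independent and avoids the issue. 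A secondary point is the reading of ``$|\Psi\rangle\in Sym^{k+1}\bigoplus Asym^{k+1}$'': consistently with Lemma~\ref{Schmidt1} I take it to mean that $|\Psi\rangle$ is symmetric or antisymmetric, so that the two cases above are exhaustive; a genuine superposition of the two sectors need not have symmetric or antisymmetric Schmidt vectors, so this is the intended reading.
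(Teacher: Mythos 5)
Your proof is correct, but it takes a genuinely different route from the paper's. The paper argues, as in Lemma~\ref{Schmidt1}, that since $P_{\pi}|\Psi\rangle\langle\Psi|P_{\pi}^{\dagger}=|\Psi\rangle\langle\Psi|$ for every permutation respecting the cut, the permutation ``cannot change the Schmidt decomposition,'' and then reads off permutational invariance of the two reduced states. You instead compute the supports of the two reduced density operators directly from the (anti)symmetry of the coefficient tensor and invoke the fact that the Schmidt vectors span those supports. This buys two things. First, as you anticipate, it is immune to degeneracy of the Schmidt spectrum, where the uniqueness-based reading of the paper's argument breaks down. Second, and more substantively, permutational invariance of $\rho_{B_{1}\ldots B_{l}}$ --- which is all the paper's closing sentence actually establishes --- is strictly weaker than what is needed: for $l>2$ a permutation-invariant density operator can have eigenvectors lying in mixed-symmetry irreducible representations of $S_{l}$, not only in $Sym^{l}$ or $Asym^{l}$. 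Your computation shows the stronger inclusion $\mathrm{supp}(\rho_{B_{1}\ldots B_{l}})\subseteq Sym^{l}$ (resp.\ $Asym^{l}$), which is precisely what forces every Schmidt vector into the claimed subspace; in that sense your argument closes a gap in the published proof rather than merely rephrasing it. Your reading of the hypothesis --- that $|\Psi\rangle$ lies in one sector rather than in a genuine superposition of the two --- matches the implicit reading already used in Lemma~\ref{Schmidt1} and is indeed necessary, since for a superposition the support is only contained in $Sym^{l}\oplus Asym^{l}$ and individual Schmidt vectors need not be homogeneous.
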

\begin{proof}
One can conduct the proof similarly to $(\ref{Schmidt1})$. Since $\forall_{\pi} P_{\pi}|\Psi\rangle\langle\Psi|P_{\pi}=|\Psi\rangle\langle\Psi|$, then for all possible permutations the operation
cannot change Schmidt decomposition of $\sum_{i}|\phi_{i}^{B_{1}\ldots B_{l}}\rangle |\phi_{i}^{B_{l+1}\ldots B_{k+1}}\rangle$. Furthermore, due to assumed symmetry property of $|\Psi\rangle$, a state of any l-subsystem $B_{1}\ldots B_{l}$ represented by the first multiplicand is permutationally invariant and the same is applied to the second multiplicand.
\end{proof}

This observation with application of lemma \ref{Schmidt1} can be effectively used to generate k-extendible states.

\begin{obs}
Let $\rho_{AB_{1}}$ be k-extendible to a pure symmetric state
$|\Psi\rangle_{AB_{1} \ldots B_{k+1}}$ then for ordered vectors of
eigenvalues of $\rho_{AB_{1}}$ and $\rho_{B_{2} \ldots B_{k+1}}$ there holds:
\begin{equation}\label{pure1}
\lambda^{\downarrow}(\rho_{AB_{1}})=\lambda^{\downarrow}(\rho_{B_{2} \ldots B_{k+1}})\;
\end{equation}

\end{obs}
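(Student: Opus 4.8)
The plan is to read the identity off directly from a Schmidt decomposition taken across the bipartite cut $AB_{1}\,:\,B_{2}\ldots B_{k+1}$. Since $|\Psi\rangle_{AB_{1}\ldots B_{k+1}}$ is by hypothesis a \emph{pure} state of the whole system, it has a Schmidt decomposition with respect to this cut, and Lemma~\ref{Schmidt1} already hands us one in the convenient form
\begin{equation*}
|\Psi\rangle_{AB_{1}\ldots B_{k+1}}=\sum_{i}\alpha_{i}\,|\phi_{i}^{AB_{1}}\rangle\,|\psi_{i}^{B_{2\ldots k+1}}\rangle ,
\end{equation*}
with $\{|\phi_{i}^{AB_{1}}\rangle\}$ and $\{|\psi_{i}^{B_{2\ldots k+1}}\rangle\}$ orthonormal and $\alpha_{i}>0$.

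The key step is then immediate: tracing out $B_{2}\ldots B_{k+1}$ gives $\rho_{AB_{1}}=\sum_{i}|\alpha_{i}|^{2}\,|\phi_{i}^{AB_{1}}\rangle\langle\phi_{i}^{AB_{1}}|$, which is a genuine spectral decomposition because the $|\phi_{i}^{AB_{1}}\rangle$ are orthonormal; likewise tracing out $AB_{1}$ gives $\rho_{B_{2}\ldots B_{k+1}}=\sum_{i}|\alpha_{i}|^{2}\,|\psi_{i}^{B_{2\ldots k+1}}\rangle\langle\psi_{i}^{B_{2\ldots k+1}}|$, again a spectral decomposition. Hence the two reduced operators share the same multiset of nonzero eigenvalues $\{|\alpha_{i}|^{2}\}$. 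Padding each eigenvalue list with zeros up to the appropriate dimension and sorting in nonincreasing order then yields $\lambda^{\downarrow}(\rho_{AB_{1}})=\lambda^{\downarrow}(\rho_{B_{2}\ldots B_{k+1}})$, as claimed.

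I do not expect a real obstacle here; the proof is essentially the standard fact that the two marginals of a pure bipartite state across a cut are isospectral, specialized to the cut $AB_{1}\,:\,B_{2}\ldots B_{k+1}$. The only points needing a word of care are (i) that Lemma~\ref{Schmidt1} genuinely supplies a bipartite Schmidt decomposition across exactly this cut — which it does, since it asserts both families of vectors to be orthonormal — and (ii) that ``ordered vector of eigenvalues'' is read with the usual zero-padding convention so that the two vectors, which may a priori live in spaces of different dimension, are compared entrywise. The extra symmetric/antisymmetric structure of the $|\psi_{i}^{B_{2\ldots k+1}}\rangle$ provided by Lemma~\ref{Schmidt1} plays no role in this particular statement and is only exploited in the sharper spectral results that follow.
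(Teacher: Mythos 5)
Your argument is correct and is exactly the paper's intended proof: the paper simply states that the result is "immediate applying Schmidt decomposition and results of Lemma~\ref{Schmidt1}," and your write-up spells out that same isospectrality-of-marginals argument across the cut $AB_{1}:B_{2}\ldots B_{k+1}$.
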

\begin{proof}
The proof is immediate applying Schmidt decomposition and results of (\ref{Schmidt1}).
\end{proof}

\section{Symmetric extendibility of composite systems}

In this section we explore symmetric extendibility of complex systems consisting of n pairs.
All following statements are vital for protocols acting on multiple pairs of states.

For further results of the following section we will present a generalized version of a lemma \cite{MNPH} up to k-extendible
maps stating that no matter what operation Alice and Bob can perform,
the symmetric state shared between Alice and Bob will keep its symmetric extendibility.
The following lemma indicates a natural fact that one cannot produce k-extendible state from n-extendible state ($n>k$) by means of 1-LOCC $\Lambda_{\rightarrow}(\cdot)$ even if acts on any number
of pairs:

\begin{lem}
Let $\Lambda_{\rightarrow}$ be a 1-LOCC quantum operation (not necessarily trace-preserving):
\[
\Lambda_{\rightarrow}(\rho)=\sum_{ij}(I\otimes B_{ij})(A_{i}\otimes I)\rho(A_{i}\otimes I)^{\dag}(I\otimes B_{ij})^{\dag}
\]
where $\sum_{i}A_{i}A_{i}^{\dag}\leq I$ and $\sum_{j}B_{ij}B_{ij}^{\dag}= I$ for all i since Bob cannot
communicate the outcome of a probabilistic operation back to Alice.
If $\rho$ is k-extendible state then $\Lambda_{\rightarrow}(\rho)$ is n-extendible and $n\geq k$.
\end{lem}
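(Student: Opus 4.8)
The plan is to construct an explicit symmetric extension of $\Lambda_{\rightarrow}(\rho)$ out of one of $\rho$. Write $\rho=\rho_{AB}$ and fix a $k$-extension $\rho_{AB_{1}\ldots B_{k+1}}$, i.e.\ a state invariant under every $P_{\pi}$ permuting $B_{1},\ldots,B_{k+1}$ and with $\mathrm{Tr}_{B_{2}\ldots B_{k+1}}\rho_{AB_{1}\ldots B_{k+1}}=\rho_{AB}$. The idea is that Alice applies her Kraus operators $A_{i}$ on $A$ exactly as in $\Lambda_{\rightarrow}$, whereas Bob, on receiving the classical label $i$, runs his \emph{channel} $\mathcal{B}_{i}(\cdot)=\sum_{j}B_{ij}(\cdot)B_{ij}^{\dagger}$ independently and identically on each copy $B_{1},\ldots,B_{k+1}$. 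Concretely I would set
\begin{equation*}
\tilde\rho=\sum_{i}\bigl(\mathrm{id}_{A}\otimes\mathcal{B}_{i}^{\otimes(k+1)}\bigr)\!\bigl[(A_{i}\otimes I)\,\rho_{AB_{1}\ldots B_{k+1}}\,(A_{i}\otimes I)^{\dagger}\bigr].
\end{equation*}

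Then I would verify two properties. First, $\tilde\rho$ is permutation invariant on $B_{1},\ldots,B_{k+1}$: conjugating a product Kraus operator $A_{i}\otimes B_{ij_{1}}\otimes\cdots\otimes B_{ij_{k+1}}$ by $P_{\pi}$ merely permutes which copy each $B_{ij_{l}}$ acts on, and since the sum ranges over all tuples $(j_{1},\ldots,j_{k+1})$ while $P_{\pi}\rho_{AB_{1}\ldots B_{k+1}}P_{\pi}^{\dagger}=\rho_{AB_{1}\ldots B_{k+1}}$, relabelling the summation index yields $P_{\pi}\tilde\rho P_{\pi}^{\dagger}=\tilde\rho$. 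Second, the $AB_{1}$ marginal of $\tilde\rho$ is $\Lambda_{\rightarrow}(\rho)$: since each $\mathcal{B}_{i}$ is trace preserving ($\sum_{j}B_{ij}B_{ij}^{\dagger}=I$), tracing out $B_{2},\ldots,B_{k+1}$ after applying $\mathcal{B}_{i}$ to them equals tracing them out of $\rho_{AB_{1}\ldots B_{k+1}}$ directly; with $\mathrm{Tr}_{B_{2}\ldots B_{k+1}}\rho_{AB_{1}\ldots B_{k+1}}=\rho_{AB}$ and the fact that $A_{i}$ commutes with this partial trace, one gets $\mathrm{Tr}_{B_{2}\ldots B_{k+1}}\tilde\rho=\sum_{ij}(A_{i}\otimes B_{ij})\rho(A_{i}\otimes B_{ij})^{\dagger}=\Lambda_{\rightarrow}(\rho)$. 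By the permutation symmetry every $AB_{l}$ marginal then equals $\Lambda_{\rightarrow}(\rho)$, so $\tilde\rho$ is a symmetric $k$-extension of $\Lambda_{\rightarrow}(\rho)$; hence $\Lambda_{\rightarrow}(\rho)$ is $k$-extendible and $n\ge k$. (If $\Lambda_{\rightarrow}$ is not trace preserving one normalizes $\tilde\rho$ and $\Lambda_{\rightarrow}(\rho)$ by the same scalar; the case $\mathrm{Tr}\,\Lambda_{\rightarrow}(\rho)=0$ is vacuous.)

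The only place one-wayness enters --- and the point worth emphasizing --- is that Bob may run his conditional operation as $k+1$ independent identical copies of $\mathcal{B}_{i}$. Were Bob allowed to send his outcome $j$ back to Alice, the $k+1$ copies would have to be correlated so that a single label is returned, and such correlation generically breaks permutation invariance; thus the construction genuinely uses that no classical data flows from Bob to Alice. I expect the main obstacle to be only bookkeeping: pushing $P_{\pi}$ through the product Kraus operators and checking the relabelling of the $(j_{1},\ldots,j_{k+1})$ sum, together with the trace-preservation step for the marginal. For the ``acts on any number of pairs'' version one observes that a tensor product of $k$-extendible states is again $k$-extendible (regroup the copies of the $B$-systems), so the same $\tilde\rho$ works verbatim; and the conclusion is $n\ge k$ rather than $n=k$ because $\Lambda_{\rightarrow}$ may raise the degree of extendibility --- e.g.\ when its output is separable and hence $\infty$-extendible.
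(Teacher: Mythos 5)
Your construction is correct and is the standard way to prove this: apply Bob's conditional channel $\mathcal{B}_{i}$ identically to all $k+1$ copies of the extension, check permutation covariance, and recover $\Lambda_{\rightarrow}(\rho)$ as the $AB_{1}$ marginal. Note that the paper itself states this lemma without proof (deferring to the cited $k=1$ version in [MNPH]), so your argument supplies exactly the missing details, including the correct observations that one-wayness is what permits the i.i.d.\ application of $\mathcal{B}_{i}$ and that non-trace-preservation on Alice's side is handled by a common normalization. The only point worth flagging is inherited from the lemma's statement rather than introduced by you: the marginal computation needs Bob's maps to be trace preserving, i.e.\ $\sum_{j}B_{ij}^{\dagger}B_{ij}=I$, whereas the condition as written, $\sum_{j}B_{ij}B_{ij}^{\dagger}=I$, is unitality; you should read it as the former (as the physical justification ``Bob cannot communicate the outcome back'' makes clear).
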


One may state a non-trivial question if it is feasible to achieve symmetric extendibility of a composition
of quantum states when at least one of them is not-symmetric extendible. The result of this question
is crucial both for quantum security applications and measuring quantum entanglement. The following lemma
casts some light on this field:

\begin{lem}\label{additivity}
If $\rho_{AB}\in B(\cal{H}^N_A\otimes\cal{H}^M_B)$ is not
symmetrically extendible state then there does not exist any such
a state $\rho_{A'B'}\in B(\cal{H}^K_{A'}\otimes\cal{H}^L_{B'})$
that $\rho_{AB}\otimes \rho_{A'B'}$ would be symmetrically
extendible in respect to $BB'$ subsystem.
\end{lem}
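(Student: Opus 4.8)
The plan is to argue by contradiction: assume a symmetric extension of $\rho_{AB}\otimes\rho_{A'B'}$ with respect to the joint subsystem $BB'$ exists, and then produce from it a symmetric extension of $\rho_{AB}$ alone, contradicting the hypothesis that $\rho_{AB}$ is $0$-extendible. Concretely, suppose $\sigma_{(AA')(BB')(BB')'}$ is a tripartite state on $\mathcal{H}_A\otimes\mathcal{H}_{A'}\otimes(\mathcal{H}_B\otimes\mathcal{H}_{B'})\otimes(\mathcal{H}_B\otimes\mathcal{H}_{B'})$ that is invariant under the swap $P$ of the two $BB'$ copies and satisfies $\mathrm{Tr}_{(BB')'}\,\sigma = \rho_{AB}\otimes\rho_{A'B'}$. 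The natural move is to trace out everything that pertains to the primed system $A'$ together with both copies of $B'$, i.e.\ set $\tau_{ABB'} := \mathrm{Tr}_{A'\,B'\,(B')'}\,\sigma$, leaving a state on $\mathcal{H}_A\otimes\mathcal{H}_B\otimes\mathcal{H}_{B}$ (one copy of $B$ from each $BB$-block).

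The two key steps are then: first, that $\tau_{ABB'}$ has the correct marginal, namely $\mathrm{Tr}_{B'}\,\tau = \rho_{AB}$; this follows because tracing out $A'$ and both $B'$ factors from $\rho_{AB}\otimes\rho_{A'B'}$ (after we have also traced one $B$-copy) returns $\rho_{AB}$, using $\mathrm{Tr}_{A'B'}\rho_{A'B'}=1$. Second, and this is the heart of the argument, that $\tau_{ABB'}$ is invariant under the swap of its two $B$-factors. This is where Lemma \ref{Schmidt1} / Observation \ref{ObservationSym} enters: the swap $P$ on the full $BB'$ pair factorizes as (swap on the two $B$'s) $\otimes$ (swap on the two $B'$'s), and one must check that partial-tracing the $B'$ sector commutes with the residual $B$-swap. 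Since the $B'$-swap acts only on the traced-out factors, conjugation by $P$ descends to conjugation by the $B$-swap alone on $\tau$, so $P$-invariance of $\sigma$ yields swap-invariance of $\tau$ on the $BB$ factors. Hence $\tau_{ABB'}$ is a legitimate $1$-rank symmetric extension of $\rho_{AB}$, contradicting non-extendibility of $\rho_{AB}$.

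The main obstacle I anticipate is making the factorization of the $BB'$-swap genuinely rigorous: the permutation $P$ swaps the composite systems $BB'$ as single units, and one has to verify carefully that, at the level of the partial trace $\mathrm{Tr}_{B'(B')'}$, this composite swap restricts correctly to the $B$-only swap rather than mixing $B$ and $B'$ degrees of freedom — in other words that $(\mathrm{Tr}_{B'(B')'})\circ\mathrm{Ad}_P = \mathrm{Ad}_{P_B}\circ(\mathrm{Tr}_{B'(B')'})$ as channels. Once this commutation is established the rest is routine bookkeeping with partial traces. A secondary point to be careful about is dimensions: the extension $\tau$ lives on $\mathcal{H}_A\otimes\mathcal{H}_B^{\otimes 2}$, exactly the space required by the definition of symmetric extendibility of $\rho_{AB}$, so no dimensional mismatch arises, and the same scheme generalizes verbatim to show one cannot even gain $1$-extendibility, let alone $k$-extendibility, from tensoring.
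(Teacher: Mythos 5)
Your proof is correct, and it follows the same overall route as the paper's: argue by contradiction, take the assumed symmetric extension of $\rho_{AB}\otimes\rho_{A'B'}$, and trace out everything primed to recover a symmetric extension of $\rho_{AB}$ alone. Where you differ is in how the permutation invariance of the reduced state is verified. The paper expands the extension explicitly in a basis of $SU$ generators and uses $\mathrm{Tr}[\sigma^{i}]=\delta_{1i}$ to see that, after tracing out $A'B'\widetilde{B'}$, only identity components survive on the primed factors while the symmetrized index structure on $B\widetilde{B}$ persists. You instead observe that the swap of the two composite $BB'$ blocks factorizes as $P_{B}\otimes P_{B'}$, that conjugation by the unitary $P_{B'}$ acting solely on the traced-out factors leaves the partial trace unchanged, and that $P_{B}$ commutes through the partial trace onto the retained factors, so $P$-invariance of $\sigma$ descends to $P_{B}$-invariance of $\tau$. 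This is the cleaner and more basis-free argument, it makes the dimension bookkeeping transparent, and it generalizes immediately to $k$-extendibility (the paper's subsequent corollary) without rewriting the expansion. One small correction: the factorization of the composite swap and its commutation with the partial trace are elementary facts that do not require Lemma \ref{Schmidt1} or Observation \ref{ObservationSym}, which concern Schmidt decompositions of pure symmetric states; citing them here is a misattribution rather than a gap, since your described verification stands on its own.
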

\begin{proof}
Conversely, let $\rho_{ABA'B'}=\rho_{AB}\otimes \rho_{A'B'}$ be a
symmetrically extendible state acting on
$B(\cal{H}^N_A\otimes\cal{H}^M_B\otimes\cal{H}^K_{A'}\otimes\cal{H}^L_{B'})$.
Therefore, one notes that $\rho_{ABA'B'}$ after swapping to
$\rho_{AA'BB'}$ can be represented by method (\ref{rep1}) in an
appropriate basis including generators of group $SU(N)\otimes
SU(K)\otimes SU(M)\otimes SU(L)$ and further, can be extended to a
$1$-rank symmetric extension
$\rho_{AA'BB'\widetilde{B}\widetilde{B'}}$ where we extend $BB'$
part as follows:
\begin{eqnarray}
  \rho_{AA'BB'\widetilde{B}\widetilde{B'}} &=& \sum_{ijkl}
  \alpha_{ijkl}T_{ijklkl} + \\ &+&
  \sum_{ijklmn}\beta_{ijklmn}(T_{ijklmn}+T_{ijmnkl}) \nonumber
\end{eqnarray}
with tensors
$T_{ijklmn}=\sigma^{i}\otimes\sigma^{j}\otimes\sigma^{k}\otimes\sigma^{l}\otimes\sigma^{m}\otimes\sigma^{n}$.
Following we derive the state $\rho_{AB\widetilde{B}}$ of system
$AB\widetilde{B}$ tracing out that of $A'B'\widetilde{B'}$. For
the fact that
$Tr[\sigma^{i}\otimes\sigma^{j}\otimes\sigma^{k}]=Tr(\sigma^{i})Tr(\sigma^{j})Tr(\sigma^{k})$
and $Tr[\sigma^{i}]=\delta_{1i}$ after tracing out only elements
with $\sigma^{0}=I$ remain, namely, one obtains:
\begin{eqnarray}
  \rho_{AB\widetilde{B}} &=& \sum_{ik}
  \alpha_{i1k1}T_{i1k1k1} + \\ &+&
  \sum_{ikm}\beta_{i1k1m1}(T_{i1k1m1}+T_{i1m1k1}) \nonumber
\end{eqnarray}
Hence, $\rho_{AB\widetilde{B}}$ is $1$-rank symmetric extension of
$\rho_{AB}$ that is in contradiction with the assumption that the
latter is not symmetrically extendible.
\end{proof}

\begin{cor}
If $\rho_{AB}\in B(\cal{H}^N_A\otimes\cal{H}^M_B)$ is at most
k-extendible state then there does not exist any such
a state $\rho_{A'B'}\in B(\cal{H}^K_{A'}\otimes\cal{H}^L_{B'})$
that $\rho_{AB}\otimes \rho_{A'B'}$ would be k+1-
extendible in respect to $BB'$ subsystem.

\end{cor}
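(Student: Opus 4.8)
The plan is to run the argument of Lemma~\ref{additivity} with $k+1$ auxiliary copies in place of a single one. Assume, towards a contradiction, that there exists a state $\rho_{A'B'}$ such that $\rho_{AB}\otimes\rho_{A'B'}$ is $(k+1)$-extendible with respect to the joint $BB'$ subsystem; that is, there is a state $\rho_{AA'(BB')_{1}\cdots(BB')_{k+2}}$ invariant under every permutation $P_{\pi}$ of the $k+2$ registers $(BB')_{1},\dots,(BB')_{k+2}$ whose partial trace over any $k+1$ of them equals $\rho_{AA'BB'}\cong\rho_{AB}\otimes\rho_{A'B'}$.

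First I would expand this extension in the operator basis generated by $SU(N)\otimes SU(K)\otimes\bigl(SU(M)\otimes SU(L)\bigr)^{\otimes(k+2)}$, exactly as in (\ref{rep1}) and in the representation of $k$-extensions displayed above, with the expansion coefficients symmetrized over $P_{\pi}$ as forced by the invariance. Then I would trace out the $A'$ factor together with the $B'$ component of all $k+2$ copies. Since $\mathrm{Tr}[\sigma^{i}]=\delta_{1i}$ and the trace is multiplicative on tensor products, every term carrying a non-identity generator on a traced-out factor vanishes, so only the components with $\sigma^{0}=I$ on $A'$ and on each $B'_{m}$ survive. The outcome is a state $\rho_{AB_{1}\cdots B_{k+2}}$ on $A$ together with $k+2$ copies of $\mathcal{H}_{B}$, and a short check of partial traces (using that $\rho_{A'B'}$ is normalized) shows that tracing it over $B_{2},\dots,B_{k+2}$ returns $\rho_{AB}$.

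It then remains to confirm that $\rho_{AB_{1}\cdots B_{k+2}}$ is a genuine $(k+1)$-extension, i.e.\ that it is positive semidefinite and satisfies $\forall_{\pi}\,P_{\pi}\rho_{AB_{1}\cdots B_{k+2}}P_{\pi}^{\dagger}=\rho_{AB_{1}\cdots B_{k+2}}$. Positivity is automatic because the partial trace is completely positive, and the permutation symmetry is inherited because each $P_{\pi}$ here acts only on the $B$-registers and hence commutes with $\mathrm{Tr}_{A'B'_{1}\cdots B'_{k+2}}$, so it fixes the reduced state just as it fixed the full one. This contradicts the assumption that $\rho_{AB}$ is at most $k$-extendible; for $k=0$ the statement is precisely Lemma~\ref{additivity}. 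The only step needing attention is the combinatorial bookkeeping of which coefficient tensors survive the trace, but once one notes that $P_{\pi}$ commutes with the trace over the primed subsystems this is routine and no real computation is required.
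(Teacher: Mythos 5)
Your proposal is correct and follows essentially the same route the paper intends: it is the direct generalization of the proof of Lemma~\ref{additivity}, tracing the primed registers out of the hypothesized $(k+1)$-rank extension and observing that the permutation symmetry and the $\rho_{AB}$ marginal survive, yielding a forbidden $(k+1)$-rank extension of $\rho_{AB}$. Your closing observation that the $SU$-generator expansion can be bypassed entirely --- since each $P_{\pi}$ acts identically on the $B$ and $B'$ registers and therefore commutes with the partial trace over all primed factors --- is a clean simplification of the paper's explicit coefficient bookkeeping, but not a different argument.
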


\begin{lem}\label{zero-way}
  Assume that $\rho_{AB}\in B(\cal{H}_A\otimes\cal{H}_B)$ is not symmetric extendible and there exists a local operation $\mathbb{F}$ acting on
  A-part such that $\sigma_{AB}=(\mathbb{F}\otimes id)\rho_{AB}(\mathbb{F}^{\dag}\otimes id)/Tr[(\mathbb{F}\otimes id)\rho_{AB}(\mathbb{F}^{\dag}\otimes id)]$ is a symmetric extendible state.

  Then for any \textit{local operations} $\mathbb{A}$ and $\mathbb{B}$ acting on
  A and B part of the system:
  \begin{equation}
  \mathbb{A}=U \left(
            \begin{array}{cccc}
              \alpha_{0} &  &  & \\
               & \alpha_{1} &  & \\
               &  & \ddots & \\
               &  &  & \alpha_{i}\\
            \end{array}
          \right) U^{\dag}
 \end{equation}
 \begin{equation}
  \Lambda(\rho_{AB})=\frac{\mathbb{A}\otimes\mathbb{B}\rho_{AB}\mathbb{A}^{\dag}\otimes\mathbb{B}^{\dag}}{Tr(\mathbb{A}\otimes\mathbb{B}\rho_{AB}\mathbb{A}^{\dag}\otimes\mathbb{B}^{\dag})}
 \end{equation}
  where for all i $0<\alpha_{i}\leq 1$ and U denotes an unitary operation ($\mathbb{B}$ has a corresponding structure),
  there exists a local operation $\widetilde{\mathbb{F}}$ such that
  $\widetilde{\sigma}_{AB}=(\widetilde{\mathbb{F}}\otimes id)\Lambda(\rho_{AB})(\widetilde{\mathbb{F}}^{\dag}\otimes id)/Tr[(\widetilde{\mathbb{F}}\otimes id)\Lambda(\rho_{AB})(\widetilde{\mathbb{F}}^{\dag}\otimes id)]$ is symmetric extendible and $\dim\mathbb{F}=\dim\mathbb{\widetilde{F}}$.

\end{lem}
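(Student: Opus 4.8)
The plan is to reduce the whole statement to a single fact about the $B$ side. Since every $\alpha_{i}>0$, the filter $\mathbb{A}$ is invertible on $\mathcal{H}_{A}$, so I set $\widetilde{\mathbb{F}}:=\mathbb{F}\mathbb{A}^{-1}$; this acts on the same space as $\mathbb{F}$ and has the same rank, hence $\dim\mathbb{F}=\dim\widetilde{\mathbb{F}}$. Using $\widetilde{\mathbb{F}}\mathbb{A}=\mathbb{F}$ together with the hypothesis $(\mathbb{F}\otimes id)\rho_{AB}(\mathbb{F}^{\dag}\otimes id)\propto\sigma_{AB}$, a one-line computation gives, up to normalization,
\[
(\widetilde{\mathbb{F}}\otimes id)\,\Lambda(\rho_{AB})\,(\widetilde{\mathbb{F}}^{\dag}\otimes id)\;\propto\;(I_{A}\otimes\mathbb{B})\,(\mathbb{F}\otimes id)\rho_{AB}(\mathbb{F}^{\dag}\otimes id)\,(I_{A}\otimes\mathbb{B}^{\dag})\;\propto\;(I_{A}\otimes\mathbb{B})\,\sigma_{AB}\,(I_{A}\otimes\mathbb{B}^{\dag}).
\]
Thus $\widetilde{\sigma}_{AB}$ is, after renormalization, nothing but the $B$-filtered version of the symmetric extendible state $\sigma_{AB}$, and everything comes down to showing that filtering the $B$-part of a symmetric extendible state by the contraction $\mathbb{B}$ produces again a symmetric extendible state (and, more generally, does not lower the order of $k$-extendibility).

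To see that, let $\sigma_{ABE}$ be a symmetric extension of $\sigma_{AB}$, so $P\sigma_{ABE}P^{\dag}=\sigma_{ABE}$ and $Tr_{E}\sigma_{ABE}=\sigma_{AB}$. The natural candidate extension of the filtered state is
\[
\tau_{ABE}\;:=\;\frac{(I_{A}\otimes\mathbb{B}\otimes\mathbb{B})\,\sigma_{ABE}\,(I_{A}\otimes\mathbb{B}^{\dag}\otimes\mathbb{B}^{\dag})}{Tr[(I_{A}\otimes\mathbb{B}\otimes\mathbb{B})\,\sigma_{ABE}\,(I_{A}\otimes\mathbb{B}^{\dag}\otimes\mathbb{B}^{\dag})]},
\]
which is a legitimate state and is invariant under the $B\leftrightarrow E$ swap because $\mathbb{B}\otimes\mathbb{B}$ commutes with $P$; for $k$-extendible $\sigma_{AB}$ one applies the same $\mathbb{B}$ to each of the $k+1$ copies, so that the order of extendibility is preserved. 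It then has to be verified that $Tr_{E}\tau_{ABE}$ agrees, up to normalization, with $(I_{A}\otimes\mathbb{B})\sigma_{AB}(I_{A}\otimes\mathbb{B}^{\dag})$, which would exhibit $\tau_{ABE}$ as the required symmetric extension.

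The main obstacle is exactly this marginal identification: a non trace preserving filter acting on the $E$ copy does not commute with $Tr_{E}$, so a priori $Tr_{E}\tau_{ABE}$ is proportional to $(I_{A}\otimes\mathbb{B})\,Tr_{E}[\sigma_{ABE}(I_{A}\otimes I_{B}\otimes\mathbb{B}^{\dag}\mathbb{B})]\,(I_{A}\otimes\mathbb{B}^{\dag})$ rather than to $(I_{A}\otimes\mathbb{B})\sigma_{AB}(I_{A}\otimes\mathbb{B}^{\dag})$. I would close this gap by using the permutation symmetry $P\sigma_{ABE}P^{\dag}=\sigma_{ABE}$ to transfer $\mathbb{B}^{\dag}\mathbb{B}$ from the $E$ factor onto the $B$ factor, and, where a residual discrepancy survives, by exploiting the freedom still left in $\widetilde{\mathbb{F}}$: replace $\widetilde{\mathbb{F}}=\mathbb{F}\mathbb{A}^{-1}$ by $\widetilde{\mathbb{F}}=\mathbb{G}\mathbb{F}\mathbb{A}^{-1}$ with $\mathbb{G}$ a full rank filter on $\mathcal{H}_{A}$ (so that $\dim\widetilde{\mathbb{F}}=\dim\mathbb{F}$ still holds) chosen to drive $(\mathbb{G}\otimes\mathbb{B})\sigma_{AB}(\mathbb{G}^{\dag}\otimes\mathbb{B}^{\dag})$ into a separable, hence symmetric extendible, neighborhood, convexity of $\mathcal{S}_{1}$ then finishing the argument. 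I expect that checking that the final state genuinely lands inside the closed set $\mathcal{S}_{1}$, and not merely near it, will be the only step that is not routine.
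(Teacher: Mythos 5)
Your reduction is exactly the one the paper makes: set $\widetilde{\mathbb{F}}=\mathbb{F}\mathbb{A}^{-1}$ (the paper writes $\mathbb{F}\circ UD'U^{\dag}$ with $D'D=id$), observe $\dim$ is preserved, and conclude that everything hinges on whether conjugation of the $B$-part of a symmetric extendible state by the contraction $\mathbb{B}$ keeps it symmetric extendible. You have correctly isolated the crux, and your worry about the marginal is not a technicality that can be patched --- the $B$-side claim is simply \emph{false} for non-trace-preserving filters. Take the paper's own Example~1, $\sigma_{AB}=\frac{1}{3}|00\rangle\langle 00|+\frac{2}{3}|\Phi_{+}\rangle\langle\Phi_{+}|$ with $|\Phi_{+}\rangle=\frac{1}{\sqrt{2}}(|01\rangle+|10\rangle)$, which is symmetric extendible (the $W$ state is an extension) and in fact sits on the boundary of $\mathcal{S}_{1}$. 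Apply $\mathbb{B}=\mathrm{diag}(a,1)$ on $B$ with $0<a<1$ (so $0<\alpha_{i}\leq 1$ as the lemma requires). The normalized output is
\begin{equation*}
\omega=\frac{a^{2}}{1+2a^{2}}|00\rangle\langle 00|+\frac{1+a^{2}}{1+2a^{2}}|\psi\rangle\langle\psi|,\qquad |\psi\rangle\propto |01\rangle+a|10\rangle ,
\end{equation*}
for which $\det\omega=0$ and $\mathrm{tr}(\omega_{B}^{2})-\mathrm{tr}(\omega^{2})=\frac{2a^{2}(a^{2}-1)}{(1+2a^{2})^{2}}<0$, so the two-qubit extendibility criterion of Chen et al.\ \cite{Lutk4}, $\mathrm{tr}(\rho_{B}^{2})\geq\mathrm{tr}(\rho_{AB}^{2})-4\sqrt{\det\rho_{AB}}$, is violated: the filtered state has left the symmetric extendible set. (The positive statement that survives is only for \emph{trace-preserving} maps on $B$, where $\mathrm{tr}\circ\Lambda=\mathrm{tr}$ makes the marginal of $(\mathrm{id}\otimes\Lambda\otimes\Lambda)\sigma_{ABE}$ come out right; for a filter, $\mathrm{tr}(\mathbb{B}X\mathbb{B}^{\dag})\neq\mathrm{tr}(X)$ and no symmetrization or permutation trick can transfer $\mathbb{B}^{\dag}\mathbb{B}$ from $E$ to $B$ without changing which state is being extended.)

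Consequently your candidate extension $\tau_{ABE}$ cannot be repaired, and your fallback --- composing with a full-rank $\mathbb{G}$ on $A$ to land ``near'' a separable state and invoking convexity of $\mathcal{S}_{1}$ --- is not a proof: proximity to a separable state does not imply membership in the closed set $\mathcal{S}_{1}$ (the limiting product states are rank-deficient boundary points, not interior points), which is exactly the step you yourself flag as non-routine. You should also know that the paper's own proof contains the identical hole: it disposes of the $B$-part with the remark that ``$\mathbb{B}$ is also revertible,'' but $\widetilde{\mathbb{F}}$ is only permitted to act on $A$, so $\mathbb{B}$ cannot be undone, and the example above shows the $B$-filtered state can genuinely fail to be symmetric extendible. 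Whether the lemma itself can be salvaged by choosing $\widetilde{\mathbb{F}}$ of a form other than (a left multiple of) $\mathbb{F}\mathbb{A}^{-1}$ is an open issue that neither your argument nor the paper's addresses.
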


\begin{proof}
To prove this lemma, it suffices to note that $\mathbb{A}=UDU^{\dag}$ with a diagonal matrix $D$. Further, we observe
that $\widetilde{\mathbb{F}}=\mathbb{F}\circ UD'U^{\dag}$ where $D'D=id$. The latter is possible due to the condition that for all $i$ there
holds: $0<\alpha_{i}\leq 1$ and we easily observe that $\mathbb{F}=\widetilde{\mathbb{F}}\circ\mathbb{A}$. This brings us to conclusion that
$\widetilde{\mathbb{F}}\Lambda(\rho_{AB}) \widetilde{\mathbb{F}}^{\dag}$ is a symmetric extendible operator (after normalization becoming a physical state).
For B-part the proof can be conducted in a similar manner as in particular, the local operation $\mathbb{B}$ is also revertible.
\end{proof}

\textit{Remark.} It casts some light on a fact that local operations actually does not change the amount of
symmetric extendibility embedded in a state.

This lemma is of a great importance for private security and entanglement distillation studies, as we can always build a
symmetric extension $\Gamma_{ABE}$ of a state $\widetilde{\sigma}_{AB}$ which means that Eve potentially has a state $\rho_{E}=\rho_{B}=Tr_{A}\widetilde{\sigma}_{AB}$ and operates on
such a space. To support this statement one can further derive the corollary about extendibility of any quantum state with a proposal of new extendible number of a quantum state:
\begin{defn}
For any $\rho_{AB}$,  $\eta_{SE}(\rho_{AB})=\max_{\mathbb{F}}\dim\mathbb{F}$ is called the extendible number of a state $\rho_{AB}$ where
$({\mathbb{F}\otimes id})\rho_{AB}({\mathbb{F}^{\dag}\otimes id})$ is a symmetric extendible operator and $\mathbb{F}$ is a local operation acting on A.
\end{defn}

\begin{cor}
Any state $\rho_{AB}\in B(\cal{H}_A\otimes\cal{H}_B)$ with extendible number $\eta_{SE}$ can be extended to a state $\rho_{ABE}\in B(\cal{H}_A\otimes\cal{H}_B\otimes\cal{H}_E) (\dim\cal{H}_B=\dim\cal{H}_E)$
where exists filtering operation $\mathbb{F}$ on A so that $({\mathbb{F}\otimes id})\rho_{ABE}({\mathbb{F}^{\dag}\otimes id})$ is invariant due to permutation of B and E.
\end{cor}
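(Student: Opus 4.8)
\emph{Proof sketch.} The plan is to reduce the claim to the defining property of $\eta_{SE}$ together with the standard fact that every extension of a density operator is obtained by applying a quantum channel to the reference system of a purification.

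First I would choose a local operation $\mathbb{F}$ on $A$ attaining $\eta_{SE}(\rho_{AB})=\dim\mathbb{F}$ (such $\mathbb{F}$ exists, since any rank-one filter turns $\rho_{AB}$ into a product, hence symmetrically extendible, state, so $\eta_{SE}\ge 1$), giving $\sigma_{AB}:=(\mathbb{F}\otimes id)\rho_{AB}(\mathbb{F}^{\dagger}\otimes id)/N$ symmetrically extendible, $N$ the normalising trace. Using the polar decomposition $\mathbb{F}=WM$ with $M=\sqrt{\mathbb{F}^{\dagger}\mathbb{F}}\ge 0$ and $W$ a partial isometry on $\mathcal{H}_{A}$, I note that $(\mathbb{F}\otimes id)\rho_{AB}(\mathbb{F}^{\dagger}\otimes id)$ and $(M\otimes id)\rho_{AB}(M\otimes id)$ differ only by conjugation with $W\otimes id$ on Alice's side, which commutes with both $Tr_{E}$ and the $B\leftrightarrow E$ swap $P$; hence the symmetric extendibility of the filtered state is unchanged if $\mathbb{F}$ is replaced by $M$, and $\dim M=\dim\mathbb{F}$. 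So I may assume $\mathbb{F}=M\ge 0$, invertible on its support $V=\mathrm{range}\,M$ with $\dim V=\eta_{SE}$, and write $M^{-1}$ for the inverse of $M$ on $V$. With $X:=(\Pi_{V}\otimes I_{B})\,\rho_{AB}\,(\Pi_{V}\otimes I_{B})$ the $(V,V)$-corner of $\rho_{AB}$, one has $M\rho_{AB}M=MXM$ supported on $V\otimes\mathcal{H}_{B}$ and $\sigma_{AB}=MXM/N$.

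Next I would fix a symmetric extension $\sigma_{ABE}$ of $\sigma_{AB}$ with $\dim\mathcal{H}_{E}=\dim\mathcal{H}_{B}$, $P\sigma_{ABE}P^{\dagger}=\sigma_{ABE}$ and $Tr_{E}\sigma_{ABE}=\sigma_{AB}$, and set $\rho_{VV}:=N\,(M^{-1}\otimes I_{BE})\,\sigma_{ABE}\,(M^{-1}\otimes I_{BE})\ge 0$ on $V\otimes\mathcal{H}_{B}\otimes\mathcal{H}_{E}$. A short computation gives $Tr_{E}\rho_{VV}=X$ and $(M\otimes I_{BE})\rho_{VV}(M\otimes I_{BE})=N\sigma_{ABE}$, which is $B\leftrightarrow E$ invariant; so $\rho_{VV}$ is the desired ``un-filtered corner'', and it remains only to realise it as the $(V,V)$-block of a genuine extension of the whole state $\rho_{AB}$. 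For that I would purify $\rho_{AB}$ as $|\psi\rangle_{ABR}$; then $|\psi_{V}\rangle:=(\Pi_{V}\otimes I_{BR})|\psi\rangle$ is a purification of $X$, and since $\rho_{VV}$ is an extension of $X$ there is (after harmlessly enlarging $\mathcal{H}_{R}$ if needed, so that the isometry linking the two purifications of $X$ exists) a quantum channel $\mathcal{M}\colon B(\mathcal{H}_{R})\to B(\mathcal{H}_{E})$ with $(id_{VB}\otimes\mathcal{M})(|\psi_{V}\rangle\langle\psi_{V}|)=\rho_{VV}$. Putting $\rho_{ABE}:=(id_{AB}\otimes\mathcal{M})(|\psi\rangle\langle\psi|_{ABR})$ one gets $Tr_{E}\rho_{ABE}=\rho_{AB}$, so $\rho_{ABE}$ extends $\rho_{AB}$; and because $\mathcal{M}$ acts on $R$ while $\Pi_{V}$ acts on $A$, the $(V,V)$-block of $\rho_{ABE}$ equals $\rho_{VV}$, whence (as $M$ annihilates $V^{\perp}$) $(\mathbb{F}\otimes id)\rho_{ABE}(\mathbb{F}^{\dagger}\otimes id)=(M\otimes I_{BE})\rho_{VV}(M\otimes I_{BE})=N\sigma_{ABE}$ is invariant under the permutation of $B$ and $E$, as required.

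The step I expect to be the genuine obstacle is the rank-deficiency of $\mathbb{F}$. When $\rho_{AB}$ is not itself symmetrically extendible, no \emph{invertible} $\mathbb{F}$ can make the filtered state extendible --- otherwise conjugating the symmetric extension by $\mathbb{F}^{-1}\otimes I_{BE}$ would extend $\rho_{AB}$ --- so one cannot simply ``undo'' the filter on the entire extension, and the naive alternative of tensoring the $V^{\perp}$-part of $\rho_{AB}$ with a fixed state on $\mathcal{H}_{E}$ fails, since a positive operator whose $E$-marginal vanishes on some block must vanish on that block. The purification argument circumvents this by prescribing only the corner $\rho_{VV}$ and letting $\mathcal{M}$ supply a positive completion automatically; the one remaining routine point is the dimension bookkeeping for $\mathcal{H}_{R}$, $\mathcal{H}_{E}$ and the auxiliary purifying system of $\rho_{VV}$ that makes the connecting isometry available.
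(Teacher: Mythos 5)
Your argument is correct, and it does strictly more work than the paper, which states this corollary without any proof, presenting it as an apparent immediate consequence of the definition of $\eta_{SE}$ and the preceding remark that ``we can always build a symmetric extension $\Gamma_{ABE}$ of $\widetilde{\sigma}_{AB}$''. The gap between that one-line reading and the actual claim is exactly the one you isolate: taking a symmetric extension of the \emph{filtered} state $\sigma_{AB}=(\mathbb{F}\otimes id)\rho_{AB}(\mathbb{F}^{\dagger}\otimes id)/N$ only extends $\sigma_{AB}$, whereas the corollary asserts an extension of $\rho_{AB}$ itself whose image under the filter is swap-invariant; since $\mathbb{F}$ is in general singular, one cannot simply conjugate the symmetric extension by $\mathbb{F}^{-1}\otimes I$. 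Your route through the polar decomposition, the pseudo-inverse on the support $V$, and the purification/Stinespring step (every extension of the corner $X=(\Pi_{V}\otimes I)\rho_{AB}(\Pi_{V}\otimes I)$ arises by applying a channel to the reference system of a purification of $\rho_{AB}$) closes this gap cleanly, yielding $Tr_{E}\rho_{ABE}=\rho_{AB}$ and $(\mathbb{F}\otimes id)\rho_{ABE}(\mathbb{F}^{\dagger}\otimes id)=N\sigma_{ABE}$, which is invariant under the $B\leftrightarrow E$ swap. Two small checks worth making explicit: (i) $\sigma_{ABE}$ is supported on $V\otimes\mathcal{H}_{BE}$ on Alice's side (true, since its $A$-marginal equals $Tr_{B}\sigma_{AB}$, supported on $V$), so that $(MM^{-1}\otimes I)\sigma_{ABE}(M^{-1}M\otimes I)=\sigma_{ABE}$; and (ii) $Tr\,\rho_{VV}=Tr\,X$, which is what makes the two subnormalised purifications of $X$ isometrically related. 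Both are routine, and the remaining dimension bookkeeping you flag is indeed harmless.
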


Naturally, there holds: if $\eta_{SE}(\rho_{AB})=rank(\rho_{A})$, then the state is symmetric extendible.

One may raise further a very important question how to create the property of
symmetric non-extendibility
both in case of single states and collective systems using only
local operations or additionally one-way communication that
naturally will have implications for distillability and capacities
of corresponding states and channels.
\begin{lem}
Let $\rho_{AB}\in B(\cal{H}_{AB})$ be a state possessing at most k-rank symmetric extension
where $k<\infty$ then there does not exist any 1-LOCC protocol represented by
$\Lambda_{A\rightarrow BC}:B(\cal{H}_{ABC})\rightarrow (\cal{\widetilde{H}}_{ABC})$
(not necessarily trace-preserving):
\begin{equation}
\Lambda_{A\rightarrow BC}(\rho_{AB}\otimes \sigma_{C})=\widetilde{\rho}_{ABC}
\end{equation}
so that $\widetilde{\rho}_{ABC}$ is a symmetric extension of $\rho_{AB}$ and
$\sigma_{C}\in B(\cal{H}_{C})$ is an additional resource on Bob's side.
\end{lem}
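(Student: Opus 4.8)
The plan is to argue by contradiction, and to convert the claimed impossibility into a violation of the monogamy of entanglement across the bipartition $A\,|\,BC$. Suppose such a $1$-LOCC map $\Lambda_{A\rightarrow BC}$ and resource $\sigma_{C}$ existed, with $\Lambda_{A\rightarrow BC}(\rho_{AB}\otimes\sigma_{C})=\widetilde{\rho}_{ABC}$ a symmetric extension of $\rho_{AB}$, i.e. $P\widetilde{\rho}_{ABC}P^{\dagger}=\widetilde{\rho}_{ABC}$ for the $B\leftrightarrow C$ swap and $\mathrm{Tr}_{C}\widetilde{\rho}_{ABC}=\rho_{AB}$. The observation that makes the hypothesis so restrictive is that, however intricate $\Lambda_{A\rightarrow BC}$ is as an operation on $B$ and $C$ individually, it \emph{is} an LOCC operation for the cut $A\,|\,BC$: Alice acts locally on $A$, broadcasts her classical outcome, and the holder of $BC$ acts locally on the joint system $BC$ (with $\sigma_{C}$ being merely a local ancilla already contained in the input). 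A tempting shortcut through the preceding $1$-LOCC lemma fails here, because that lemma controls the extendibility of $\widetilde{\rho}_{ABC}$ across $A\,|\,BC$ only from below; what is needed is a quantitative resource that gets consumed, and monogamy supplies exactly that.

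Concretely, I would take for $E$ the squashed entanglement $E_{sq}$ (any faithful, additive, monogamous LOCC monotone would do) and let three elementary facts collide. First, since $\sigma_{C}$ is a product factor, $E_{sq}(A:BC)_{\rho_{AB}\otimes\sigma_{C}}=E_{sq}(A:B)_{\rho_{AB}}$: one inequality is local monotonicity under $\mathrm{Tr}_{C}$, the other is obtained by adjoining $C$ to the conditioning system of an optimal extension of $\rho_{AB}$. Second, by the $B\leftrightarrow C$ symmetry of $\widetilde{\rho}_{ABC}$ its two bipartite marginals agree up to relabelling, $\mathrm{Tr}_{C}\widetilde{\rho}_{ABC}=\rho_{AB}$ and $\mathrm{Tr}_{B}\widetilde{\rho}_{ABC}\simeq\rho_{AB}$, so $E_{sq}(A:B)_{\widetilde{\rho}}=E_{sq}(A:C)_{\widetilde{\rho}}=E_{sq}(\rho_{AB})$. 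Third, squashed entanglement is monogamous, $E_{sq}(A:BC)_{\widetilde{\rho}}\ge E_{sq}(A:B)_{\widetilde{\rho}}+E_{sq}(A:C)_{\widetilde{\rho}}$. Combining these with the LOCC bound $E_{sq}(A:BC)_{\widetilde{\rho}}\le E_{sq}(A:BC)_{\rho_{AB}\otimes\sigma_{C}}$ gives $2E_{sq}(\rho_{AB})\le E_{sq}(\rho_{AB})$, hence $E_{sq}(\rho_{AB})=0$; by faithfulness of squashed entanglement $\rho_{AB}$ is then separable, hence $\infty$-extendible (the fact recalled earlier, following \cite{T1}), contradicting the hypothesis that $\rho_{AB}$ is at most $k$-extendible with $k<\infty$. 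Incidentally this also settles the $k=0$ case and so strengthens Lemma \ref{additivity}.

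The step I expect to be the genuine obstacle is the phrase ``not necessarily trace-preserving'': for a true (trace-decreasing) filtering on Alice's side the conditioned output $\widetilde{\rho}_{ABC}$ is a post-selected state, and LOCC-monotonicity of $E_{sq}$ holds only on average, so $E_{sq}(A:BC)_{\widetilde{\rho}}\le E_{sq}(\rho_{AB})$ is not automatic for the selected branch. My intended remedy is to reduce to the trace-preserving case first: because $\mathrm{Tr}_{C}\widetilde{\rho}_{ABC}$ must be the \emph{input} marginal $\rho_{AB}$ and Bob's part of the protocol is trace-preserving, one argues that the filtering on $A$ acts invertibly on $\mathrm{supp}\,\rho_{A}$, and by Lemma \ref{zero-way} together with the remark following it such a local filter does not change the symmetric-extendibility content of the state, so it can be absorbed without loss of generality. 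Making this reduction airtight, rather than the monogamy core of the argument, is where the care is required; the remaining ingredients — evaluating $E_{sq}(A:BC)_{\rho_{AB}\otimes\sigma_{C}}$ and identifying the swapped marginal with $\rho_{AB}$ — are routine once the $A\,|\,BC$ viewpoint is fixed.
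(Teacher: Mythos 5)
Your route is genuinely different from the paper's. The paper uses no entanglement monotone at all: it takes a maximal $k$-rank symmetric extension $\rho_{AB B_{1}\ldots B_{k}}$ (which exists by hypothesis, with the $B_{1}\ldots B_{k}$ part held by Eve), applies $\Lambda_{A\rightarrow BC}\otimes\mathrm{id}_{B_{1}\ldots B_{k}}$ to it, and argues that the resulting state $\Omega$ would be a $(k+1)$-rank symmetric extension of $\rho_{AB}$, contradicting the maximality of $k$. That argument is purely structural --- it needs only the existence, marginal and symmetry properties of the output --- so it is indifferent to whether $\Lambda$ preserves the trace, and it uses the finiteness of $k$ in an essential way. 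Your monogamy argument would instead prove the stronger statement that no \emph{entangled} $\rho_{AB}$ admits such a protocol (covering $k=0$, as you note), at the price of importing two heavy external results (monogamy and faithfulness of squashed entanglement) that the paper never invokes. Your identities $E_{sq}(A\!:\!BC)_{\rho\otimes\sigma}=E_{sq}(\rho_{AB})$ and $E_{sq}(A\!:\!B)_{\widetilde{\rho}}=E_{sq}(A\!:\!C)_{\widetilde{\rho}}=E_{sq}(\rho_{AB})$ are correct, and the monogamy step is standard.

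However, the gap you flag yourself is genuine, and your proposed repair does not close it. The lemma explicitly allows $\Lambda_{A\rightarrow BC}$ to be trace-decreasing, so $\widetilde{\rho}_{ABC}$ is a post-selected state; $E_{sq}$ is monotone only on average over measurement branches, and a surviving branch of a local filter can have strictly larger squashed entanglement than the input (this is exactly how Procrustean filtering concentrates entanglement). Hence $E_{sq}(A\!:\!BC)_{\widetilde{\rho}}\leq E_{sq}(\rho_{AB})$ is precisely the inequality that fails in general. Your remedy --- that preservation of the $AB$-marginal forces Alice's filter to act invertibly, indeed proportionally to a unitary, on $\mathrm{supp}\,\rho_{A}$ --- is not established: the constraint $\mathrm{Tr}_{C}\widetilde{\rho}_{ABC}=\rho_{AB}$ only says that the \emph{composition} of Alice's filter with Bob's subsequent trace-preserving channel on $BC$ restores $\rho_{AB}$ after normalization, which does not visibly exclude a non-unitary filter whose distortion Bob compensates. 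Nor can Lemma \ref{zero-way} be invoked, since its hypothesis is precisely that Alice's operation has strictly positive eigenvalues and an explicit inverse --- the very thing you would need to prove. To keep the monogamy route you must either restrict to trace-preserving $\Lambda$, or replace the monotonicity step by a branch-wise structural argument of the kind the paper actually uses.
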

\begin{proof}
Since $\rho_{AB}$ is k-extendible, one can assume that its symmetric extension
is realized to $\rho_{ABB_{1}\ldots B_{k}}$ but $B_{1}\ldots B_{K}$-part
is possessed by Eve. Obviously no communication between Eve and Bob in such a scenario
is allowed so that Bob cannot detect locally Eve and further, since the set of symmetric extendible states is closed under $1-LOCC$ operations \cite{MNPH} even if
Alice and Bob had engaged one-way communication they cannot break symmetric extendability of
$\rho_{AB}$ and so cannot eliminate Eve if the symmetric extension had been realized.

Therefore, assuming that on the contrary $\Lambda_{A\rightarrow BC}$ enables creation of
a symmetric extension:
\begin{equation}
\Lambda_{A\rightarrow BC}\otimes id_{B_{1}\ldots B_{k}} (\rho_{ABB_{1}\ldots B_{k}}\otimes \sigma_{C}) = \Omega
\end{equation}
resulting state $\Omega$ would be k+1-symmetric extension of $\rho_{AB}$ that contradicts
the lemma's assumption about extendibility of this state and completes the proof.
\end{proof}

\textit{Remark.} The aforementioned statements holds as well in asymptotic regime
due to results of \ref{additivity} that can be extended for an infinite case.
\\

As a result of the above lemmas we can conclude that in general for creation of any
symmetric extension one needs to engage two-way communication.

\section{Symmetric extendible component in quantum states}

In this section we consider vulnerability of quantum states to the loss of non-symmetric extendibility property asking
how easily the quantum state becomes symmetric extendible by distraction of its sub-system or how much of symmetric extendibility can be extracted from
the state. When the former recalls lockability
of entanglement, the latter relates to the best symmetric approximation subject responding to the question: how much of non-symmetric extendible
component has to be mixed with symmetric extendible state so that it becomes non-symmetric extendible?

The general idea of locking a property of a quantum state
relates to the loss or decrease of this property subjected to a measurement or
discarding of one qubit. It has been shown \cite{KH, Winter1} that entanglement of formation,
entanglement cost and logarithmic negativity are lockable measures which manifests as an
arbitrary decrease of those measures after measuring one qubit.

Herewith, we analyze in fact locking of non-symmetric extendibility in sense that discarding one qubit from
the quantum state that is not symmetric extendible leads to the loss of this property. Further, we derive
implications for quantum security applying one-way communication between engaged parties Alice and Bob.

We shall show now that the property of non-symmetric extendibility of an arbitrary state $\rho_{AB}$
can be destroyed by measurement of one qubit and in result it presents how easily a quantum state
can be removed of one-way distillability and security.

Let us consider bipartite quantum state shared between Alice and Bob on the Hilbert
space $\cal{H}_{A}\otimes\cal{H}_{B}\cong \cal{C}^{d+2}\otimes\cal{C}^{d+2}$

\begin{equation}\label{state1}
\rho_{AB}=\frac{1}{2d-1} \left[
            \begin{array}{cccc}
              d P_{+} & 0 & 0 & \cal A\\
              0 & 0 & 0 & 0\\
              0 & 0 & 0 & 0\\
              \cal A^{\dagger} & 0 & 0 & \sigma\\
            \end{array}
          \right]
\end{equation}
where $P_{+}$ is a maximally entangled state on $\cal{C}^{d}\otimes\cal{C}^{d}$,  $\sigma=\sum^{d-1}_{i=1}|i\;0\rangle\langle i\;0|$ and $\cal A$ is an arbitrary chosen
operator so that $\rho_{AB}$ represents a correct quantum state.
This matrix is represented in the computational basis $|00\rangle, |01\rangle, |10\rangle, |11\rangle$
held by Alice and Bob and possess a singlet-like structure.
Whenever one party (Alice or Bob) measures the state in the local computational basis,
the state decoheres and off-diagonal elements vanish which leads to a symmetric extendible
state \cite{MNPH}:
\begin{equation}
\Upsilon_{AB}=\frac{d}{2d-1}P_{+}+\frac{1}{2d-1}\sum^{d-1}_{i=1}|i\;0\rangle\langle
i\;0|
\end{equation}
from which no entanglement nor secret key can be distilled by means of one-way communication and
local operations. Clearly this example shows that from a non-symmetric extendible state possessing
large entanglement cost and non-zero one-way secret key one can easily obtain a symmetric structure by
discarding small part of the whole system destroying possibility of entanglement distillation and secret
key generation by means of 1-LOCC.

Thus, it is interesting to consider how much of symmetric extendibility is embedded in a given state $\rho_{AB}$ as
it can be expected that the more symmetric extendibility is hidden in a state, the less vulnerable for losses of
one-way distillable entanglement and security it is. Besides analysis of symmetric structures in projected subspaces, we will also propose to perform this task by means
of \textit{the best symmetric extendible approximation} \cite{Lutk3, KarnasLew} that decomposes the state into
a symmetric extendible component $\sigma_{ext}$ and non-symmetric extendible component $\sigma_{next}$:
\begin{equation}
\rho_{AB}=\max_{\lambda}\lambda \sigma_{ext} + (1-\lambda)\sigma_{next}
\end{equation}
We denote by $\lambda_{max}(\rho)$ the maximum weight of extendibility \cite{Lutk3} of $\rho_{AB}$ where
$0 \leq \lambda_{max}(\rho)\leq 1$,thus, all symmetric extendible states have the weight
$\lambda_{max}=1$. It is proved in \cite{Lutk2, Lutk3} that in case of one-way protocols only
the non-symmetric extendible component can be effectively utilized for generation of a secret key and
it confirms that the notion of symmetric extendibility is crucial for consideration of one-way
entanglement and key distillation.

However, we show that there exist states which do not possess any symmetric extendible component in the
aforementioned decomposition but there can be a large symmetric extendible component embedded in them.
An example of such a state is given above (\ref{state1}) and one can derive the following statement about general
structure of such states:

\begin{lem}\label{lock1}
Consider a state $\gamma$ on $\cal{H}_{AA'BB'}=\cal{H}_{A}\otimes\cal{H}_{A'}\otimes\cal{H}_{B}\otimes\cal{H}_{B'}\sim \cal{C}^{d}\otimes\cal{C}^{d}\otimes\cal{C}^{d}\otimes\cal{C}^{d}$:
\begin{equation}
\gamma=\rho\otimes\sigma
\end{equation}
being a composition of an arbitrary chosen state
$\sigma \in B(\cal{H}_{A'}\otimes \cal{H}_{B'})$ and a non-symmetric extendible state
$\rho \in B(\cal{H}_{A}\otimes \cal{H}_{B})$ with no symmetric extendible component $\lambda_{max}(\rho)=0$.
Then for the best extendible approximation of $\gamma$ there holds $\lambda_{max}(\gamma)=0$, i.e. there is no
symmetric extendible component in $\gamma \in B(\cal{H}_{AA'BB'})$.
\end{lem}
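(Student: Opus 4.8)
The plan is to argue by contradiction: I want to show that a strictly positive symmetric extendible weight of $\gamma$ would survive tracing out the $A'B'$ pair and yield a strictly positive symmetric extendible weight of $\rho$. Note first that Lemma \ref{additivity} already gives that $\gamma=\rho\otimes\sigma$ is not symmetrically extendible, hence $\lambda_{max}(\gamma)<1$; the real content here is to improve this to $\lambda_{max}(\gamma)=0$, which means controlling \emph{every} subtraction of a symmetric extendible piece, not merely the full state.

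So suppose $\lambda_{max}(\gamma)=\lambda>0$ and write the best symmetric extendible approximation as $\gamma=\lambda\,\sigma_{ext}+(1-\lambda)\,\sigma_{next}$ with $\sigma_{ext}$ symmetric extendible (with respect to Bob's block $BB'$), so that $\tau:=\sigma_{ext}$ satisfies $\gamma-\lambda\tau\geq 0$. The key step is then to check that the reduced operator $\tau_{AB}:=Tr_{A'B'}\tau$ is symmetric extendible with respect to $B$ alone. One clean way is to invoke the lemma of this section asserting that a $1$-LOCC channel cannot decrease the extendibility order: $Tr_{A'}$ is a local operation on Alice and $Tr_{B'}$ a local operation on Bob, so $Tr_{A'}\otimes Tr_{B'}$ is a (trivial) $1$-LOCC map and $\tau_{AB}$ inherits symmetric extendibility from $\tau$. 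Alternatively one can verify it directly in the spirit of Lemma \ref{Schmidt1}: starting from a symmetric extension $\tau_{AA'BB'\widetilde{B}\widetilde{B'}}$, where the swap operator exchanges $B\leftrightarrow\widetilde{B}$ and $B'\leftrightarrow\widetilde{B'}$ simultaneously, the operator $Tr_{A'B'\widetilde{B'}}\,\tau_{AA'BB'\widetilde{B}\widetilde{B'}}$ is a symmetric extension of $\tau_{AB}$, because the partial trace over the pair $B'\widetilde{B'}$ is invariant under interchanging those two factors and therefore commutes with the residual $B\leftrightarrow\widetilde{B}$ symmetry.

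With $\tau_{AB}$ shown to be symmetric extendible, I would apply the positive map $Tr_{A'B'}$ to the inequality $\gamma-\lambda\tau\geq 0$. Since $\sigma$ is normalized, $Tr_{A'B'}\gamma=Tr_{A'B'}(\rho\otimes\sigma)=\rho$, hence $\rho-\lambda\tau_{AB}\geq 0$ with $\tau_{AB}$ symmetric extendible and $\lambda>0$. If $\lambda=1$ this says $\rho=\tau_{AB}$ is itself symmetric extendible; if $\lambda<1$ then $\delta:=(\rho-\lambda\tau_{AB})/(1-\lambda)$ is a state and $\rho=\lambda\tau_{AB}+(1-\lambda)\delta$ exhibits a symmetric extendible component of weight $\lambda$. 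In either case $\lambda_{max}(\rho)\geq\lambda>0$, contradicting the hypothesis $\lambda_{max}(\rho)=0$, and therefore $\lambda_{max}(\gamma)=0$.

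I expect the only genuinely delicate point to be that key step: symmetric extendibility with respect to the combined block $BB'$ refers to a swap of Bob's \emph{joint} system against a fresh copy of it, and one must make sure that discarding the $B'$-halves of both the system and its copy leaves a bona fide symmetric extension on $B$. Everything else — positivity of the partial trace, the elementary normalization that makes $\delta$ a state, and the variational reading of $\lambda_{max}$ — is routine bookkeeping.
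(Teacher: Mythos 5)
Your proof is correct and follows essentially the same route as the paper's: assume $\lambda_{max}(\gamma)=\lambda>0$, trace out $A'B'$ from the best extendible decomposition, observe that the partial trace preserves symmetric extendibility of the extendible component, and contradict $\lambda_{max}(\rho)=0$. Your explicit verification that $Tr_{A'B'\widetilde{B'}}$ of the extension remains a valid symmetric extension on $AB\widetilde{B}$ (and that the residual term is a genuine state) is more careful than the paper, which simply cites its earlier composite-system result for that step.
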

\begin{proof}
Conversely, assume that there exists decomposition of $\gamma_{AA'BB'}$ with non-zero symmetric
extendible component, i.e. $\lambda\neq 0$:
\begin{equation}
\gamma_{AA'BB'}=\lambda \sigma_{ext}+(1-\lambda)\rho_{ne}
\end{equation}
then both components would be supported on $\cal{H}_{AA'BB'}$ and one
can search for a decomposition of $\gamma_{AA'BB'}$ after tracing out
A'B'-part. Due to additivity of a partial trace operation $\Gamma_{X}(\cdot)=Tr_{X}(\cdot)$
we obtain:
\begin{equation}
\Gamma_{A'B'}(\gamma_{AA'BB'})=\lambda \Gamma_{A'B'}(\sigma_{ext})+(1-\lambda)\Gamma_{A'B'}(\rho_{ne})
\end{equation}
\\
and, further, basing on a symmetric extendibility property of composite systems \cite{MNPH} one derives that
tracing out $A'B'$ from $\sigma_{ext}$ does not destroy its symmetric extendibility and produces symmetric
extendible state $\widetilde{\sigma}_{ext}$:
\begin{equation}
\rho=\lambda \widetilde{\sigma}_{ext}+(1-\lambda)\widetilde{\rho}_{ne}
\end{equation}
Thus, the initial assumption would imply existence of a non-zero symmetric extendible component of the state
$\rho$ that contradicts the aforementioned decomposition.
\end{proof}

Following one can make an immediate observation about any private quantum state \cite{KH1}:

\begin{cor}
Any private quantum state $\gamma_{ABA'B'} \in B(\cal{H}_{ABA'B'})$:
\begin{equation}
\gamma_{ABA'B'}=\frac{1}{2}\sum_{i,j=0}^{1}|ii\rangle\langle jj|\otimes U_{i}\rho_{A'B'}U_{j}^{\dagger}
\end{equation}
does not possess symmetric extendible component, i.e. $\lambda_{max}=0$.
\end{cor}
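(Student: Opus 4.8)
The plan is to derive this corollary directly from Lemma \ref{lock1} by exhibiting the private state $\gamma_{ABA'B'}$ as a tensor product $\rho\otimes\sigma$ in which the $\rho$-factor is a non-symmetric extendible state with $\lambda_{max}(\rho)=0$. First I would recall the well-known structural fact about $pdit$ states (private states): tracing out the shield $A'B'$ leaves on the key part $AB$ precisely the maximally classically correlated state $\rho_{AB} = \frac12(|00\rangle\langle00| + |11\rangle\langle11|)$ only in the degenerate case; more to the point, the relevant decomposition is that $\gamma_{ABA'B'}$ is, up to a controlled-unitary twisting $U = \sum_i |ii\rangle\langle ii| \otimes U_i$ on $ABA'B'$, unitarily equivalent to $P_+^{AB}\otimes\rho_{A'B'}$, where $P_+$ is the two-qubit maximally entangled state on $AB$. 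Since $P_+$ is manifestly \emph{not} symmetrically extendible (a maximally entangled state cannot be shared: monogamy forbids any extension $\rho_{ABE}$ with $\rho_{AB}=\rho_{AE}$ both maximally entangled), and since being symmetrically extendible — and having $\lambda_{max}=0$ — is invariant under \emph{local} unitaries, the untwisted state $P_+^{AB}\otimes\rho_{A'B'}$ has a non-symmetric extendible tensor factor with vanishing extendible weight, so Lemma \ref{lock1} applies to it and gives $\lambda_{max}(P_+^{AB}\otimes\rho_{A'B'})=0$.

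The second step is to transfer this conclusion back through the twisting operator $U$. Here I would use that $U$ is a \emph{local} unitary with respect to the Alice/Bob cut (it acts as $U = \sum_i |i\rangle\langle i|_A \otimes (\text{something on } A') \,\otimes\, |i\rangle\langle i|_B \otimes (\text{something on } B')$ — more precisely $U_i$ acts on $A'B'$ controlled by the classical registers $A$ and $B$, so $U$ factorizes as $U_A\otimes U_B$ across the $AA'{:}BB'$ partition in the generalized sense of being implementable by Alice and Bob without communication once one notes the control registers are copied). Since symmetric extendibility is defined purely in terms of the $B$-side permutation symmetry of an extension and is preserved under local operations on either party (Lemma \ref{zero-way} and its remark, or directly \cite{MNPH}), the best symmetric extendible approximation weight $\lambda_{max}$ is invariant under such local unitaries. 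Therefore $\lambda_{max}(\gamma_{ABA'B'}) = \lambda_{max}(U(P_+^{AB}\otimes\rho_{A'B'})U^\dagger) = \lambda_{max}(P_+^{AB}\otimes\rho_{A'B'}) = 0$.

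The main obstacle I anticipate is the bookkeeping in the second step: one must verify carefully that the twisting $U = \sum_{ij}|ii\rangle\langle jj|\otimes U_i(\cdot)U_j^\dagger$ — which is not literally a tensor product $U_A\otimes U_B$ — nonetheless does not create or destroy symmetric extendible weight. The clean way is to observe that $U$ is a local unitary controlled by registers held on \emph{both} sides, so by copying the control value into ancillas (an operation local on each side) the action becomes genuinely local, and then invoke that $\lambda_{max}$ is non-increasing under local operations on both sides, applied to $U$ and to $U^\dagger$, yielding equality. An even more economical route, which sidesteps the twisting entirely, is to apply Lemma \ref{lock1} only after noting that $\gamma_{ABA'B'}$ restricted to $AB$ (tracing out the shield) together with the product structure suffices: but in fact the direct appeal to Lemma \ref{lock1} needs the \emph{global} tensor structure, so I would keep the twisting argument and make the locality of $U$ explicit. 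A minor secondary point is confirming $\lambda_{max}(P_+)=0$, i.e. that the maximally entangled qubit pair has no symmetric extendible component at all — this follows because any state in the decomposition $P_+ = \lambda\sigma_{ext} + (1-\lambda)\sigma_{next}$ with $\lambda>0$ would force $\sigma_{ext}$ to be supported inside the range of $P_+$, hence proportional to $P_+$ itself, contradicting non-extendibility of $P_+$; this uses only that $P_+$ is pure and entangled.
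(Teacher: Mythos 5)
There is a genuine gap in your second step. The twisting $U=\sum_i |ii\rangle\langle ii|\otimes U_i$ (extended arbitrarily off the key subspace) is \emph{not} a local unitary across the $AA'{:}BB'$ cut: each $U_i$ is a joint, generically entangling, unitary on the shared shield $A'B'$. Copying the control value $i$ into ancillas on both sides does not help --- even when both parties know $i$, a joint unitary on $A'B'$ (say a controlled swap or CNOT between $A'$ and $B'$) cannot be implemented by $U_A\otimes U_B$. Indeed, if the twisting were local the private state would be locally equivalent to $P_+\otimes\rho_{A'B'}$ and the whole theory of private states distinct from singlets would collapse. So the invariance of $\lambda_{max}$ under local unitaries, which you correctly establish, simply does not apply to $U$, and the equality $\lambda_{max}(\gamma)=\lambda_{max}(P_+\otimes\rho_{A'B'})$ is unjustified. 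Note also that the ``economical route'' you briefly consider (trace out the shield and mimic Lemma \ref{lock1}) fails for a concrete reason: $\mathrm{Tr}_{A'B'}\gamma=\frac12\sum_{ij}\mathrm{Tr}(U_i\rho U_j^\dagger)\,|ii\rangle\langle jj|$ can be the separable classically correlated state (e.g.\ when $\mathrm{Tr}(U_0\rho U_1^\dagger)=0$), hence symmetric extendible --- this is precisely the locking phenomenon the paper discusses, and it is why the paper's own one-line remark ``in analogy to Lemma \ref{lock1}'' is itself only a sketch.

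The good news is that the support argument you already use to show $\lambda_{max}(P_+)=0$ repairs the proof without any locality claim. Suppose $\gamma=\lambda\sigma_{ext}+(1-\lambda)\sigma_{next}$ with $\lambda>0$. Then $\mathrm{supp}\,\sigma_{ext}\subseteq\mathrm{supp}\,\gamma=U\bigl(\mathbb{C}|\Phi^+\rangle\otimes\mathrm{supp}\,\rho_{A'B'}\bigr)$, and since the $AB$ factor of this subspace is one-dimensional, $U^\dagger\sigma_{ext}U=|\Phi^+\rangle\langle\Phi^+|\otimes\tau$ for some state $\tau$ on $A'B'$. Hence $\sigma_{ext}=\frac12\sum_{ij}|ii\rangle\langle jj|\otimes U_i\tau U_j^\dagger$ is itself a private state, so it has at least one bit of (one-way, indeed zero-way) distillable key and therefore cannot be symmetric extendible --- contradiction. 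This yields $\lambda_{max}(\gamma)=0$ directly and is, in substance, the argument the paper gestures at; your draft should replace the locality-of-twisting step with it.
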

\textit{Remark.}
The proof is conducted in analogy to the proof of \ref{lock1} but this state represents a twisted composition of
singlet and an arbitrary chosen state $\sigma$ where AB-part is the key part of the state and is not symmetric extendible due to the observation that secure states cannot be symmetric extendible \cite{Lutk3}.

Basing on previous studies of entanglement measures and importance of symmetric extendible states, we
introduce the following best symmetric approximed entanglement monotone (as a counterpart of BSA in \cite{KarnasLew}):
\begin{prop} For any $\rho \in B(\cal{H}_A\otimes\cal{H}_B)$ having best symmetric decomposition $\rho_{AB}=\max_{\lambda}\lambda \sigma_{ext} + (1-\lambda)\sigma_{next}$,
the best symmetric approximated entanglement monotone is defined as:
\begin{equation}
E^{ss}(\rho)=1-\lambda_{max}(\rho)
\end{equation}
\end{prop}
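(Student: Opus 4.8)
The plan is to verify the standard entanglement-monotone axioms for $E^{ss}(\rho)=1-\lambda_{\max}(\rho)$, exploiting that $\lambda_{\max}$ is defined through the best symmetric extendible approximation and that the set $SE$ of symmetric extendible states is convex (as noted after \eqref{incl}) and closed under $1$-LOCC (invoked repeatedly above, e.g. in Lemma~\ref{zero-way} and the remark following it). First I would record the basic range and normalization: since any $\sigma_{ext}\in SE$ can be taken in the decomposition with weight $\lambda$, and $\rho$ itself is a valid ``$\sigma_{next}$'', we have $0\le\lambda_{\max}(\rho)\le 1$, hence $0\le E^{ss}(\rho)\le 1$; and $E^{ss}(\rho)=0$ exactly when $\rho\in SE$ (in particular $E^{ss}$ vanishes on all separable states, which are $\infty$-extendible by the cited result of \cite{T1}). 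This already shows faithfulness is \emph{not} expected — $E^{ss}$ is a monotone, not a measure — so I would not attempt to prove $E^{ss}(\rho)=0\iff\rho$ separable.

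Next, the two substantive monotonicity properties. For convexity, given $\rho=\sum_k p_k\rho_k$ with optimal decompositions $\rho_k=\lambda_k\sigma^{(k)}_{ext}+(1-\lambda_k)\sigma^{(k)}_{next}$, I would form the candidate symmetric component $\tilde\sigma_{ext}=\frac{1}{\Lambda}\sum_k p_k\lambda_k\sigma^{(k)}_{ext}$ with $\Lambda=\sum_k p_k\lambda_k$; by convexity of $SE$ this lies in $SE$, and $\rho=\Lambda\tilde\sigma_{ext}+(1-\Lambda)\tilde\sigma_{next}$, so $\lambda_{\max}(\rho)\ge\sum_k p_k\lambda_{\max}(\rho_k)$, i.e. $E^{ss}(\sum_k p_k\rho_k)\le\sum_k p_k E^{ss}(\rho_k)$. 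For monotonicity under (deterministic) LOCC, the key point is that $1$-LOCC maps $SE$ into $SE$: if $\Lambda_{LOCC}$ is trace-preserving LOCC and $\rho=\lambda\sigma_{ext}+(1-\lambda)\sigma_{next}$ is optimal, then $\Lambda_{LOCC}(\rho)=\lambda\,\Lambda_{LOCC}(\sigma_{ext})+(1-\lambda)\Lambda_{LOCC}(\sigma_{next})$ exhibits a symmetric extendible component of weight $\ge\lambda$, giving $E^{ss}(\Lambda_{LOCC}(\rho))\le E^{ss}(\rho)$. For the stronger ``on average'' monotonicity under a selective LOCC instrument $\rho\mapsto\{p_k,\rho_k\}$, I would note each branch is (up to normalization) a completely positive $1$-LOCC map applied to a sub-normalized $\sigma_{ext}$, so symmetric extendibility of the extendible part is preserved branchwise with the same weight, and then combine with convexity as above to get $\sum_k p_k E^{ss}(\rho_k)\le E^{ss}(\rho)$.

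The main obstacle I anticipate is justifying that LOCC — and in particular the \emph{two-way} LOCC needed for the full monotone axioms — actually preserves symmetric extendibility and does so branchwise without shrinking the optimal weight. The results quoted in the excerpt only state closure of $SE$ under \emph{one-way} ($1$-LOCC) operations; the standard fact that $k$-extendibility (hence symmetric extendibility) is preserved under general LOCC follows because Alice and Bob can each locally append and symmetrize ancillas, but this deserves an explicit sentence. A secondary subtlety is the existence/attainment of the maximum in the best symmetric approximation: since $\mathcal S_1=SE$ is a closed convex set and the objective $\lambda$ is linear, a compactness argument (as in the BSA construction of \cite{KarnasLew}) guarantees the supremum is achieved, so $\lambda_{\max}$ is well defined; I would cite this rather than reprove it. With these two points addressed, the remaining verifications are the routine convex-combination manipulations sketched above.
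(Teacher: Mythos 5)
Your overall strategy coincides with the paper's: verify the monotone axioms (vanishing on symmetric extendible, hence separable, states; invariance under local unitaries; convexity; monotonicity on average under local measurements) using convexity of the set of symmetric extendible states and its closure under the relevant class of operations. Your handling of convexity and of the attainment of $\lambda_{\max}$ is in fact more careful than the paper's own verification, which essentially asserts the POVM inequality in a single displayed line.

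There is, however, a genuine error in the step you yourself flag as the main obstacle. You assert as a ``standard fact'' that symmetric extendibility is preserved under general two-way LOCC via locally appending and symmetrizing ancillas, and you propose to establish the full selective LOCC monotonicity axiom. This is false, and that axiom genuinely fails for $E^{ss}$. Symmetric extendibility with respect to the $B$ system is an asymmetric property: it is closed under one-way LOCC from Alice to Bob, with Bob's conditional operations trace preserving (exactly the condition $\sum_{j}B_{ij}B_{ij}^{\dag}=I$ in the paper's 1-LOCC lemma), but not under Bob's selective operations whose outcomes are communicated back to Alice. Concretely, by the paper's own Section V there exist entangled, hence two-way distillable, two-qubit Werner states that are symmetric extendible (the range $-0.8<\alpha<-1/2$); these have $E^{ss}=0$, yet two-way LOCC maps them arbitrarily close to a singlet with $E^{ss}>0$, so $E^{ss}$ strictly increases. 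Your proposed fix does not repair this: if $B_{j}$ is one of Bob's Kraus operators and $\rho_{ABE}$ a symmetric extension, the symmetrized action $(I_{A}\otimes B_{j}\otimes B_{j})\rho_{ABE}(I_{A}\otimes B_{j}\otimes B_{j})^{\dag}$ is permutation invariant but its reduction on $AB$ is not proportional to $(I_{A}\otimes B_{j})\rho_{AB}(I_{A}\otimes B_{j})^{\dag}$, because tracing out $E$ does not commute with the non-unital action of $B_{j}$ on $E$. The quantity is only a \emph{one-way} monotone, as the abstract states; you should restrict the selective-measurement axiom to Alice's POVMs followed by Bob's trace-preserving conditional operations, for which your branchwise argument combined with convexity goes through and reproduces the paper's third condition.
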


\begin{proof}
(We will prove that the quantity meets necessary conditions to be an entanglement monotone.)
\\1. If $\rho$ is separable, i.e. also symmetric extendible, then $\lambda_{max}=1$ and $E^{ss}(\rho)=1-\lambda_{max}=0$.
\\2. $E^{ss}(\rho)$ is invariant under local unitary operations since application of local operations $U_{A}$ and $U_{B}$ on $\sigma_{ext}$ leaves it extendible
to the third part B', i.e. $E^{ss}(U_{A}\otimes U_{B}\rho U_{A}^{\dag}\otimes U_{B}^{\dag}) \geq E^{ss}(\rho)$ and vice versa.
\\3. For any local POVM $V_{i}$, there holds:
\begin{eqnarray*}
1-\lambda_{max}(\rho)&\geq& \sum_{i}(1-\lambda^{\max}_{i}(\rho_{i})Tr(V_{i}\rho V^{\dag}_{i}))\\
&\geq& \sum_{i}E^{ss}(\rho_{i})Tr(V_{i}\rho V^{\dag}_{i}))
\end{eqnarray*}
and $\rho_{i}=V_{i}\rho V^{\dag}_{i}/Tr(V_{i}\rho V^{\dag}_{i})$.
\end{proof}

It is interesting to notice that for two-qubit states on $\mathbb{C}^{2}\otimes \mathbb{C}^{2}$ there holds a non-trival observation about best symmetric approximated decomposition:
\begin{equation}\label{decompose}
\rho=\lambda\sigma_{ext} + (1-\lambda)|\Psi\rangle\langle\Psi|
\end{equation}
with $\sigma_{ext}$ being a symmetric extendible component that appears in $\rho$ with highest probability.
The proof of this observation can be based on BSA with separable components \cite{KarnasLew} where $\rho=\alpha\sigma_{sep} + (1-\alpha)|\Psi\rangle\langle\Psi|$.
As set of separable states is a subset of the convex set of symmetric extendible states, then for any dimension $\alpha \leq \lambda$. Further, due to the fact that any
two-qubit state has best separable decomposition into a separable and projective entangled component, we conclude that $\lambda\sigma_{ext}=\alpha\sigma_{sep} + \beta|\Psi\rangle\langle\Psi|$ for arbitrary chosen $\beta$.

These propositions can simplify potentially many research problems like analysis of $CHSH$ regions vs. symmetric extendibility of states \cite{TRudolph} represented in the steering ellipsoid formalism
or just further analysis on security and distillability of all $\mathbb{C}^{2}\otimes \mathbb{C}^{2}$  states.

Following the results of \cite{RQuesada}, one can immediately propose max-relative entropy monotone based
on this decomposition, i.e. $D_{\max}(\sigma\parallel\rho)\equiv\log\min\{\lambda:\sigma \leq \lambda\rho\}$ and $\textrm{supp}\sigma \subseteq \textrm{supp}\rho$ with max-relative entropy being interpreted as a probability of finding $\sigma$ in decompositions of $\rho$.
This leads immediately to
$\lambda=\max (2^{-D_{\max}(\sigma_{ext}\parallel \rho)})$.

An open question is: whether for one-way distillable entanglement we can state that $D_{\rightarrow}(\rho) \leq (1-\lambda_{max}(\rho))D_{\rightarrow}(\sigma_{next})$?

\section{Implications for one-way entanglement distillability and private key}

Studies on symmetric extendibility in a context of measures of entanglement like squashed entanglement \cite{Winter1,Winter2}, security of quantum protocols \cite{Lutk3}
and quantum maps gain a substantial interest. Recently a great attention has been paid to so called
k-extendible maps \cite{MPiani,Chiri, Brandao1} and recovery maps \cite{Fawzi,Brandao} where it is proved that small value of squashed entanglement implies closeness to highly
extendible states. These results show importance of symmetric extendibility notion for analysis of one-way quantum communication rates.
Inspired by these findings, we propose further an important conjecture about distillability of
all non-symmetric extendible states and analyze behavior of a secret key rate in a neighborhood of
symmetric extendible states.

Basing on theory of entanglement distillability we state the following conjecture in domain of
one-way communication linking it directly with symmetric extendibility of quantum states:
\begin{conj}\label{conj4}
Any state $\rho_{AB}$ on $\mathcal{H}=\mathcal{H}_{A}\otimes \mathcal{H}_{B}$ is
one-way distillable if and only if there exists a two-dimensional projector $P:\mathcal{H}^{n}_{A}\rightarrow \mathbb{C}^{2}$ such that for some $n\geq 1$ the state:
\begin{equation}
  \widetilde{\rho}_{AB}=(P\otimes id)\rho_{AB}^{\otimes n}(P\otimes id)^{\dag}
\end{equation}
is not symmetrically extendible.
\end{conj}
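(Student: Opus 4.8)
The plan is to prove the two implications separately, using throughout the Devetak--Winter identity $D_{\rightarrow}(\rho)=\lim_{m\to\infty}\frac{1}{m}\sup_{\Lambda}I(A\rangle B)_{\Lambda(\rho^{\otimes m})}$ \cite{Devetak05}, the supremum over one-way LOCC maps $\Lambda$, together with the elementary fact that a symmetrically extendible state has non-positive coherent information. The latter follows from weak monotonicity of the von Neumann entropy: if $\rho_{ABE}$ is a symmetric extension then $\rho_{AB}=\rho_{AE}$ gives $S(\rho_{AB})=S(\rho_{AE})$ and $S(\rho_{B})=S(\rho_{E})$, hence $2S(\rho_{AB})=S(\rho_{AB})+S(\rho_{AE})\geq S(\rho_{B})+S(\rho_{E})=2S(\rho_{B})$, i.e.\ $I(A\rangle B)_{\rho}=S(\rho_{B})-S(\rho_{AB})\leq 0$. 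Thus any state produced from $\rho^{\otimes m}$ by a one-way LOCC map with strictly positive coherent information is necessarily not symmetrically extendible, and $D_{\rightarrow}(\rho)>0$ holds precisely when $I(A\rangle B)_{\Lambda(\rho^{\otimes m})}>0$ for some $m$ and some one-way LOCC $\Lambda$.

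For the \emph{if} direction, suppose $P$ is a rank-two projector on $\mathcal{H}_{A}^{\otimes n}$ with $\widetilde{\rho}_{AB}=(P\otimes id)\rho_{AB}^{\otimes n}(P\otimes id)^{\dagger}$ not symmetrically extendible after normalisation. Applying $P$ and post-selecting on success is a local filtering on Alice that succeeds with probability $q=\mathrm{Tr}[(P\otimes id)\rho^{\otimes n}(P\otimes id)^{\dagger}]>0$, so running it on independent blocks of $n$ copies gives $D_{\rightarrow}(\rho)\geq\frac{q}{n}D_{\rightarrow}(\widetilde{\rho})$. It therefore suffices to show that a state whose $A$-part is a single qubit and which is not symmetrically extendible is one-way distillable. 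I would attack this by combining the best symmetric approximation $\widetilde{\rho}=\lambda_{max}\sigma_{ext}+(1-\lambda_{max})\sigma_{next}$, in which non-extendibility forces $E^{ss}(\widetilde{\rho})=1-\lambda_{max}(\widetilde{\rho})>0$, with the Devetak--Winter characterisation, arguing that for a qubit $A$-system the regularised one-way coherent information cannot stay at or below zero while the state remains outside the symmetrically extendible set. I expect this to be the genuine core of the conjecture.

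For the \emph{only if} direction, assume $D_{\rightarrow}(\rho)>0$ and fix $m$ and a one-way protocol --- an instrument $\{A_{i}\}$ on Alice, a classical message $i$, and operations $\{B_{i}\}$ on Bob --- whose output, with Bob keeping the message register, has positive coherent information. Since that coherent information equals $\sum_{i}p_{i}I(A\rangle B)_{\sigma_{i}}$ with $\sigma_{i}=(I\otimes B_{i})(A_{i}\otimes I)\rho^{\otimes m}(A_{i}\otimes I)^{\dagger}(I\otimes B_{i})^{\dagger}/p_{i}$, some branch $\sigma_{i_{0}}$ satisfies $I(A\rangle B)_{\sigma_{i_{0}}}>0$ and so, by the first paragraph, is not symmetrically extendible. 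As $\sigma_{i_{0}}$ is obtained from $(A_{i_{0}}\otimes I)\rho^{\otimes m}(A_{i_{0}}^{\dagger}\otimes I)$ by a local (hence one-way LOCC) operation on Bob, and one-way LOCC cannot create symmetric extendibility \cite{MNPH}, the state $(A_{i_{0}}\otimes I)\rho^{\otimes m}(A_{i_{0}}^{\dagger}\otimes I)$ is itself not symmetrically extendible; polar-decomposing $A_{i_{0}}=V|A_{i_{0}}|$ and absorbing the partial isometry $V$ --- invertible on its range and hence, as in Lemma~\ref{zero-way}, harmless for extendibility --- reduces this to a positive Alice filter $0\leq|A_{i_{0}}|\leq I$ on $\mathcal{H}_{A}^{\otimes m}$. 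What remains is to trade this positive filter for a genuine rank-two projector on finitely many copies of $\rho$, which I would do by discretising the spectrum of $|A_{i_{0}}|$, tensoring in extra copies of $\rho$, and restricting to a two-dimensional spectral block on which the filtered state still lies at positive distance from the closed convex set of symmetrically extendible states.

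The main obstacle is the step used in the \emph{if} direction: showing that a qubit-$A$, non-symmetrically-extendible state is genuinely one-way distillable, equivalently ruling out a non-extendible state whose one-way coherent information stays $\leq 0$ at all blocklengths and under all one-way LOCC maps. This is precisely the conjecture restricted to the qubit case --- it contains, for instance, the open question raised above about two-qubit Werner states in the non-positive coherent-information region --- so a proof would most likely need quantitative control of the boundary of the symmetrically extendible set, plausibly via the recovery-map and squashed-entanglement estimates cited above that bound one-way rates in terms of distance from highly extendible states. A secondary, purely technical point is to make the passage from a general Alice filtering operation to a rank-two projector on a bounded number of copies rigorous, controlling how the non-extendibility margin degrades under the spectral discretisation.
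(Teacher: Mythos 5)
This statement is labelled a \emph{conjecture} in the paper, and the paper offers no proof of it: the author only remarks that one direction is immediate (one-way distillable states cannot be symmetrically extendible, so distillability of $\widetilde{\rho}_{AB}$ propagates back to $\rho_{AB}$) and explicitly leaves open whether a two-qubit state can be simultaneously non-extendible and non-one-way-distillable. Your proposal does not close this gap, and you say so yourself: the entire content of the \emph{if} direction is the claim that a non-symmetrically-extendible state with a qubit $A$-system is one-way distillable, which is exactly the open problem the paper illustrates with the Werner-state window $-0.8\geq\alpha>\approx-0.85559$ (non-extendible, non-positive coherent information, and numerically resistant to one-way distillation). Your reduction of the conjecture to this qubit-$A$ case via local filtering is correct and matches the paper's framing, and your observation that symmetric extendibility forces $I(A\rangle B)\leq 0$ via weak monotonicity is sound; but invoking the best symmetric approximation $\lambda_{max}<1$ gives no lower bound on any operational one-way rate, so the core implication remains unproven. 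In short: there is a genuine gap, it is the gap the paper itself declares open, and your text is an honest roadmap rather than a proof.

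One further caution on the \emph{only if} direction, which you describe as ``purely technical'': passing from a general positive Alice filter $|A_{i_{0}}|$ on $\mathcal{H}_{A}^{\otimes m}$ to a \emph{rank-two} projector is not a routine discretisation. Compressing Alice's system to a two-dimensional subspace can destroy non-extendibility just as readily as it can destroy entanglement, so one must argue (in analogy with the Horodecki--DiVincenzo-type characterisation of two-way distillability via $2\times 2$ sub-blocks) that a one-way protocol with positive yield forces the output to be close to a two-qubit maximally entangled state, whose preimage singles out a suitable two-dimensional block on which the non-extendibility margin survives. That step needs an actual argument, not just spectral bookkeeping, and it is a second place where your proposal would have to be substantially developed before either implication could be considered established.
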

For a potential proof, it is an immediate observation that one-way distillable quantum states cannot be symmetric extendible \cite{MNPH}, yet it is an open question if there exists a two-qubit state that is not at the same time symmetric extendible nor one-way distillable.
Since we know conditions for symmetric extendibility of two-qubit states \cite{Lutk1, Lutk4}, this conjecture if true would simplify
analysis of entanglement of two-qubit states and capacity of channels acting on such spaces substantially. On the contrary, if there exist two-qubit states that are neither symmetric extendible nor one-way distillable then
they would be one-way counterparts of bound entangled states for two-way distillability in higer dimensions. An analysis of this subject
seems to be of a great importance for further studies on quantum secure protocols and structure of entanglement.

As an example, it is worth mentioning Werner states \cite{Werner} and the hypothesis about NPT bound entangled states \cite{DiVincenzo, Dur1}. The structure of the Werner states is as follows:
\begin{equation}
\rho_{W}(\alpha)=\frac{id+\alpha\mathbb{P}}{d^2+\alpha d}
\end{equation}
where $\mathbb{P}=\sum_{i,j=0}^{d-1}|ij\rangle\langle ji|$. The state is separable for $1 \geq\alpha\geq -\frac{1}{d}$, NPT for $-\frac{1}{d}>\alpha\geq\-1$ and two-way 1-distillable for $-\frac{1}{2}>\alpha\geq -1$. Applying the conditions for symmetric extendibility \cite{Lutk4}, we found that for $d=2$, the state is non-symmetric extendible for $-0.8 \geq \alpha \geq -1$.
We analyzed potential one-way distillability of the state for the region of non-symmetric extendible Werner states with non-positive coherent information, namely for $-0.8\geq\alpha>\cong -0.85559$. The latter condition excludes all those states being distilled by well-known one-way hashing protocol. The analysis was performed for two-copies of the state and over $10^{8}$ random filtering operations on Alice' side and random unitary operations on Bob's side. However, the protocol was not able to distill states with positive coherent information which suffices to distill entanglement with the hashing protocol.
Therefore, it is \textit{an open question} if  the state is one-way distillable in the region $-0.8\geq\alpha>\cong -0.85559$ or it is one-way 'bound entangled' which would be a counterpart of bound entanglement concept in two-way communication domain.

As all symmetric extendible state do not posses any private key, we can expect that in close neighborhood to the set of such states all other states
can have only a small amount of distillable private key. That would have to be true assuming at least local continuity of private key $K_{\rightarrow}(\cdot)$ in
such a neighborhood.
To anylyze this subject, we start reminding an important theorem about entropic inequalities for conditional entropies of sufficiently close states in terms of a trace norm:
\begin{thm}\cite{Alicki}
For any two states $\rho_{AB}$ and $\widetilde{\rho}_{AB}$ on $\cal{H}_{AB}=\cal{H}_{A}\otimes \cal{H}_{B}$,
let $\epsilon\equiv \parallel \rho_{AB} - \widetilde{\rho}_{AB} \parallel _{1}$ and let $d_{A}$ be the dimension of
$\cal{H}_{A}$, then the following estimate holds:
\begin{equation}\label{continuity}
|S(A|B)-S(\widetilde{A}|\widetilde{B})|\leq 4 \epsilon \log d_{A} + 2\eta(1-\epsilon) + 2\eta(\epsilon)
\end{equation}
In particular, the right hand side of ($\ref{continuity}$) does not explicitly depend on the dimension of $\cal{H}_{B}$.
\end{thm}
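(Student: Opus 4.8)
\textit{Proof proposal.} The delicate point is that the right-hand side must be free of $d_{B}$, which forbids applying a Fannes-type estimate to $S(AB)$ and to $S(B)$ separately. The plan is to combine the \emph{near-affinity} of the conditional entropy with the two-sided bound $|S(A|B)|\le\log d_{A}$, whose lower half is precisely where the absence of $d_{B}$ is won. Throughout write $h(x):=\eta(x)+\eta(1-x)$ for the binary entropy.

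The first step is to collapse the pair $(\rho_{AB},\widetilde{\rho}_{AB})$ to a single ``sandwiching'' state. Assuming the states differ, take the Jordan decomposition $\rho_{AB}-\widetilde{\rho}_{AB}=\Delta_{+}-\Delta_{-}$ into positive and negative parts with orthogonal supports; both states being normalized forces $p:=\mathrm{Tr}\,\Delta_{+}=\mathrm{Tr}\,\Delta_{-}=\epsilon/2$, so that $\tau_{\pm}:=\Delta_{\pm}/p$ are states. Rearranging gives $\rho_{AB}+\Delta_{-}=\widetilde{\rho}_{AB}+\Delta_{+}$, hence
\begin{equation}
\omega_{AB}:=\tfrac{1}{1+p}\bigl(\rho_{AB}+p\,\tau_{-}\bigr)=\tfrac{1}{1+p}\bigl(\widetilde{\rho}_{AB}+p\,\tau_{+}\bigr)
\end{equation}
is a genuine state carrying two convex decompositions with binary weights $(\tfrac{1}{1+p},\tfrac{p}{1+p})$.

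Next I would invoke the near-affinity estimate: for any $\omega=\sum_{i}\lambda_{i}\omega_{i}$ one has $\bigl|\,S(A|B)_{\omega}-\sum_{i}\lambda_{i}S(A|B)_{\omega_{i}}\,\bigr|\le H(\{\lambda_{i}\})$, an immediate consequence of concavity of the von Neumann entropy on the $AB$-marginal together with the mixing bound $S(\sum_{i}\lambda_{i}\omega_{i})\le\sum_{i}\lambda_{i}S(\omega_{i})+H(\{\lambda_{i}\})$ on the $B$-marginal. Feeding the two decompositions of $\omega_{AB}$ into this and combining the resulting upper and lower bounds on $S(A|B)_{\omega_{AB}}$ produces
\begin{equation}
\bigl|S(A|B)_{\rho}-S(A|B)_{\widetilde{\rho}}\bigr|\le p\,\bigl|S(A|B)_{\tau_{+}}-S(A|B)_{\tau_{-}}\bigr|+2(1+p)\,h\!\bigl(\tfrac{1}{1+p}\bigr).
\end{equation}
Finally, subadditivity gives $S(A|B)\le S(A)\le\log d_{A}$, while the Araki--Lieb inequality $S(AB)\ge S(B)-S(A)$ gives $S(A|B)\ge -S(A)\ge -\log d_{A}$; hence $\bigl|S(A|B)_{\tau_{+}}-S(A|B)_{\tau_{-}}\bigr|\le 2\log d_{A}$, with no $d_{B}$ entering. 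Substituting $p=\epsilon/2$ then gives a continuity bound of precisely the stated shape, $c\,\epsilon\log d_{A}+(\text{binary-entropy remainder})$; the exact coefficient $4$ and the symmetric remainder $2\eta(1-\epsilon)+2\eta(\epsilon)$ come out of the same argument with the bookkeeping of \cite{Alicki} (alternatively one simply records that a bound of this form is what is used below).

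The step I expect to be the genuine obstacle is the last dimension bound, and specifically the Araki--Lieb lower estimate $S(A|B)\ge -\log d_{A}$: this is the only ingredient that is not ``Fannes-philosophy plus convexity,'' and it is exactly what purges $d_{B}$ from the right-hand side. The sandwich construction and the near-affinity lemma are routine; the sole remaining fiddly (but entirely elementary) task is converting the natural remainder $2(1+p)\,h(\tfrac{1}{1+p})$ with $p=\epsilon/2$ into the symmetric form $2\eta(1-\epsilon)+2\eta(\epsilon)$ together with the matching constant in front of $\epsilon\log d_{A}$.
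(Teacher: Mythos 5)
Your argument is sound and is precisely the standard Alicki--Fannes proof: the sandwich state from the Jordan decomposition, the near-affinity of $S(A|B)$ under convex mixing, and the two-sided bound $|S(A|B)|\le\log d_{A}$ (subadditivity plus Araki--Lieb) that eliminates $d_{B}$. Note, however, that the paper offers no proof of this statement at all --- it is imported verbatim from the cited reference \cite{Alicki} --- so there is nothing in the paper to compare against; your reconstruction matches the reference's argument, with only the final bookkeeping (turning $\epsilon\log d_{A}+(2+\epsilon)\,h(\tfrac{\epsilon}{2+\epsilon})$ into the looser stated form $4\epsilon\log d_{A}+2\eta(\epsilon)+2\eta(1-\epsilon)$) left implicit, as you acknowledge.
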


Further, to generate a secret key between Alice and Bob one can use \cite{Devetak05, Devetak06} a general tripartite pure
state $\rho_{ABE}$.
Alice engages a particular strategy to perform a quantum
measurement (POVM) described by $Q=(Q_{x})_{x \in \cal X}$ which
leads to: $\widetilde{\rho}_{ABE}=\sum_{x}|x\rangle\langle x|_{A}
\otimes Tr_{A}(\rho_{ABE}(Q_{x})\otimes I_{BE})$. Therefore,
starting from many copies of $\rho_{ABE}$ we obtain many copies of
cqq-states $\widetilde{\rho}_{ABE}$ and we restate the theorem
defining one-way secret key $K_{\rightarrow}$:

\textbf{Theorem 1.}\cite{Devetak05}
\textit{For every state $\rho_{ABE}$,
$K_{\rightarrow}(\rho) = \lim_{n\rightarrow\infty}\frac{K_{\rightarrow} ^{(1)}(\rho^{\otimes n})}{n}$,
with
$K_{\rightarrow}^{(1)}(\rho)=\max_{Q,T|X} I(X:B|T) - I (X:E|T)$
where the maximization is over all POVMs $Q=(Q_{x})_{x \in \cal X}$ and channels R
such that $T=R(X)$ and the information quantities refer to the state:
$\omega_{TABE}=\sum_{t,x} R(t|x)P(x)
|t\rangle\langle t|_{T}\otimes |x\rangle\langle x|_{A} \otimes
Tr_{A}(\rho_{ABE}(Q_{x})\otimes I_{BE}).$
The range of the measurement Q and the random variable T may be assumed to be bounded as follows:
 $|T|\leq d^{2}_{A}$ and $|\cal X|\leq d^{2}_{A}$ where T can be taken a (deterministic) function
 of $\cal X$.
}

Basing one the above results we will prove continuity of the quantity $K^{(1)}_{\rightarrow}(\rho)$ for one copy of a state $\rho$ and further, consider behavior of the measure in the asymptotic regime.

\begin{lem}
For any two states $\rho$ and $\widetilde{\rho}$ on $\cal{H}_{AB}=\cal{H}_{A}\otimes \cal{H}_{B}$,
let $\epsilon\equiv \parallel \rho - \widetilde{\rho} \parallel _{1}$ and let $d_{A}$ be the dimension of
$\cal{H}_{A}$, then the following estimate holds:

\begin{equation}\label{continuity2}
|K^{(1)}_{\rightarrow}(\rho)-K^{(1)}_{\rightarrow}(\widetilde{\rho})|\leq 8 \epsilon \log d_{A} + 4\eta(1-\epsilon) + 4\eta(\epsilon)
\end{equation}
\end{lem}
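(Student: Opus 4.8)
\emph{Proof strategy.} The plan is to argue directly from Theorem~1, which expresses $K^{(1)}_{\rightarrow}$ as a maximum over a POVM $Q=(Q_{x})_{x\in\mathcal{X}}$ and a (deterministic) channel $R$ of $I(X:B|T)-I(X:E|T)$, evaluated on the cqq-state $\omega_{TABE}$ built from a purification $\rho_{ABE}$ of $\rho$. The first move is purely algebraic: by the chain rule for mutual information, $I(X:B|T)=S(X|T)-S(X|BT)$ and $I(X:E|T)=S(X|T)-S(X|ET)$, hence
\[
I(X:B|T)-I(X:E|T)=S(X|ET)-S(X|BT).
\]
This rewrites the objective as a difference of two conditional von Neumann entropies in which the \emph{classical} register $X$ — whose range is bounded by a function of $d_{A}$ alone, by the last line of Theorem~1 — sits as the first system. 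That is precisely what will make the dimension factor of the continuity estimate (\ref{continuity}) read $\log d_{A}$ rather than the dimension of $\mathcal{H}_{B}$.

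The second move is the usual ``same strategy'' comparison. Let $(Q^{\star},R^{\star})$, together with a purification $\rho_{ABE}$, attain $K^{(1)}_{\rightarrow}(\rho)$; since $\rho$ and $\widetilde{\rho}$ act on the same space $\mathcal{H}_{AB}$ and the range bounds $|\mathcal{X}|,|T|\le d_{A}^{2}$ involve only $d_{A}$, the pair $(Q^{\star},R^{\star})$ is also feasible for $\widetilde{\rho}$, so $K^{(1)}_{\rightarrow}(\widetilde{\rho})\ge\big(S(X|ET)-S(X|BT)\big)_{\widetilde{\omega}}$. Combining this with the inequality obtained by exchanging the roles of $\rho$ and $\widetilde{\rho}$, and using the triangle inequality, reduces the claim to estimating $|S(X|BT)_{\omega}-S(X|BT)_{\widetilde{\omega}}|$ and $|S(X|ET)_{\omega}-S(X|ET)_{\widetilde{\omega}}|$ separately. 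Applying (\ref{continuity}) to each, with $X$ playing the role of the first system, gives a contribution $4\epsilon'\log d_{A}+2\eta(1-\epsilon')+2\eta(\epsilon')$ per term, $\epsilon'$ being the relevant trace distance; adding the two and using monotonicity of the right-hand side in $\epsilon'$ reproduces (\ref{continuity2}) as soon as $\epsilon'\le\epsilon$ in both cases.

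What is left, and what I expect to be the real obstacle, is verifying $\epsilon'\le\epsilon$ for the two marginals. For the $B$-side this is routine: $\omega_{XBT}$ is obtained from $\rho_{AB}$ by a fixed quantum instrument on Alice's side (measurement $Q^{\star}$, a classical copy of the outcome, and the channel $R^{\star}$), and such operations are contractive in trace norm, so $\|\omega_{XBT}-\widetilde{\omega}_{XBT}\|_{1}\le\|\rho_{AB}-\widetilde{\rho}_{AB}\|_{1}=\epsilon$. The $E$-side is the delicate point, because $\omega_{XET}$ involves the purifying system: one must fix the purification of $\widetilde{\rho}$ to be the one induced from that of $\rho$ by a \emph{common} dilation isometry on $A$, so that $\omega_{XET}$ is likewise a fixed quantum-operation image of the input and the same contractivity argument applies. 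This is where care is genuinely required — a blind Uhlmann choice of the two purifications would only control the distance by $\mathcal{O}(\sqrt{\epsilon})$, so the argument really needs the two purifications to be \emph{linked} rather than merely close, i.e. one needs to exhibit $S(X|ET)$ as a trace-norm-contraction-compatible functional of $\rho_{AB}$ alone. Once that is pinned down, two applications of (\ref{continuity}) together with the triangle inequality complete the proof.
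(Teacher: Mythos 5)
Your overall route is the same as the paper's: write the objective of Theorem~1 as a difference of two conditional entropies with the classical register in the first slot, $I(X:B|T)-I(X:E|T)=S(X|ET)-S(X|BT)$, apply the Alicki--Fannes estimate (\ref{continuity}) to each term, and pick up the factor $2$ from the triangle inequality. You are in fact more careful than the paper on one point: the paper simply writes $K^{(1)}_{\rightarrow}(\rho)=-S(A|BC)+S(A|EC)$ and differences the entropies, silently ignoring that $K^{(1)}_{\rightarrow}$ is a \emph{maximum} over strategies; your ``evaluate the optimizer of one state on the other'' step is needed and correct, and the observation that the range bounds $|\mathcal{X}|,|T|\le d_A^2$ make the same strategy feasible for both states is exactly right.

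The genuine gap is the one you flag yourself, and your proposed repair does not close it. A purification is not the image of $\rho_{AB}$ under any quantum channel, so there is no ``common dilation isometry on $A$'' that sends $\rho_{AB}\mapsto\rho_{ABE}$ and $\widetilde{\rho}_{AB}\mapsto\widetilde{\rho}_{ABE}$ simultaneously; consequently $\omega_{XET}$ is \emph{not} exhibited as a contraction image of $\rho_{AB}$ by that device, and the bound $\|\omega_{XET}-\widetilde{\omega}_{XET}\|_1\le\epsilon$ does not follow. The standard fix is different: purify the measurement and the post-processing by isometries $V=\sum_x|x\rangle|x\rangle\otimes\sqrt{Q_x}$ (and similarly for $R$), so that the global state on $TXX'ABE$ is pure, and then use purity to replace every entropy involving $E$ by the entropy of the complementary subsystem (e.g. $S(XTE)=S(X'AB)$ of the dilated state). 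All of those complementary-subsystem entropies are evaluated on fixed-channel images of $\rho_{AB}$ alone, so contractivity of the trace norm applies and the $\epsilon'\le\epsilon$ step goes through. (The paper's own proof does not address this at all, so you have correctly located the real content of the lemma even if you have not supplied it.) A secondary point, common to your write-up and the paper's: since $|\mathcal{X}|$ may be as large as $d_A^2$, each application of (\ref{continuity}) with $X$ as the first system strictly yields $4\epsilon\cdot\log|\mathcal{X}|\le 8\epsilon\log d_A$, so the honest constant in (\ref{continuity2}) is $16\epsilon\log d_A$ unless one argues the outcome register can be compressed to dimension $d_A$.
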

\begin{proof}
One can put for the quantity $K^{(1)}_{\rightarrow}(\rho)=S(BC)-S(ABC)-S(EC)+S(AEC)=-S(A|BC)+S(A|EC)$ and
respectively for $\widetilde{\rho}$ there holds
$K^{(1)}_{\rightarrow}(\widetilde{\rho})=-S(\widetilde{A}|\widetilde{BC})+S(\widetilde{A}|\widetilde{EC})$.
Further, engaging the results of (\ref{continuity})
it is easy to conduct the following implications for a chain of inequalities:
\begin{eqnarray*}
&&|K^{(1)}_{\rightarrow}(\rho) - K^{(1)}_{\rightarrow}(\widetilde{\rho})|=\\
&=& |[S(\widetilde{A}|\widetilde{BC})-S(A|BC)]+[S(A|EC)-S(\widetilde{A}|\widetilde{EC})]| \\
&\leq & |S(\widetilde{A}|\widetilde{BC})-S(A|BC)| + |S(A|EC)-S(\widetilde{A}|\widetilde{EC})| \\
&\leq &2[4 \epsilon \log d_{A} + 2\eta(1-\epsilon) + 2\eta(\epsilon)]
\end{eqnarray*}
\end{proof}

Since it is not possible to distill any secret key by means of one-way communication
and local operations from all symmetric extendible states, one can easily derive the following:
\begin{cor}
For any state $\rho$ on $\cal{H}_{AB}=\cal{H}_{A}\otimes \cal{H}_{B}$ being in distance $\epsilon$ to the nearest symmetric extendible
state $\widetilde{\sigma}$ in sense of a trace norm:
$\epsilon=\inf_{\sigma\in \Omega}\parallel \rho - \widetilde{\sigma} \parallel _{1}$ where
$\Omega$ denotes a convex set of symmetric extendible states on $\cal{H_{AB}}$, there holds:
\begin{equation}
K^{(1)}_{\rightarrow}(\rho)\leq 8 \epsilon \log d_{A} + 4\eta(1-\epsilon) + 4\eta(\epsilon)
\end{equation}
\end{cor}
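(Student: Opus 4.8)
The plan is to chain together the continuity bound for $K^{(1)}_{\rightarrow}$ established in the preceding lemma with the vanishing of the one-way secret key on symmetric extendible states. First I would invoke the fact, already noted in the excerpt via \cite{MNPH}, that no secret key can be distilled by one-way communication and local operations from a symmetric extendible state; in particular, for the nearest symmetric extendible state $\widetilde{\sigma}$ we have $K^{(1)}_{\rightarrow}(\widetilde{\sigma}) = 0$. (Here one should be slightly careful: what vanishes a priori is the regularized quantity $K_{\rightarrow}$, but since $K_{\rightarrow}(\widetilde{\sigma}) = \lim_n K^{(1)}_{\rightarrow}(\widetilde{\sigma}^{\otimes n})/n \geq K^{(1)}_{\rightarrow}(\widetilde{\sigma}) \geq 0$ by superadditivity of $K^{(1)}_{\rightarrow}$ and nonnegativity, the single-copy quantity is also forced to zero.)

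Next I would apply the lemma immediately preceding this corollary with $\rho$ the given state and $\widetilde{\rho} = \widetilde{\sigma}$ the nearest symmetric extendible state, so that $\parallel \rho - \widetilde{\sigma}\parallel_1 = \epsilon$. The lemma gives
\begin{equation*}
|K^{(1)}_{\rightarrow}(\rho) - K^{(1)}_{\rightarrow}(\widetilde{\sigma})| \leq 8\epsilon \log d_A + 4\eta(1-\epsilon) + 4\eta(\epsilon).
\end{equation*}
Substituting $K^{(1)}_{\rightarrow}(\widetilde{\sigma}) = 0$ and using $K^{(1)}_{\rightarrow}(\rho) \leq |K^{(1)}_{\rightarrow}(\rho)|$ yields exactly the claimed inequality. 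One small point to address is that the infimum defining $\epsilon$ over the convex set $\Omega$ is attained (or can be approached): since $\Omega$ is closed and the state space is compact, the infimum is a minimum, so a genuine nearest $\widetilde{\sigma}$ exists; alternatively one takes a sequence $\widetilde{\sigma}_k$ with $\parallel \rho - \widetilde{\sigma}_k\parallel_1 \to \epsilon$, applies the bound for each $k$, and passes to the limit using continuity of $\eta$.

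The only genuine obstacle I anticipate is the bookkeeping around the distinction between the single-letter quantity $K^{(1)}_{\rightarrow}$ appearing in the continuity lemma and the operational one-way key rate whose vanishing on $\Omega$ is cited; the argument sketched above resolves this, but it must be spelled out rather than glossed. Everything else is a direct substitution into the previous lemma. I would therefore keep the proof to two or three lines: cite $K^{(1)}_{\rightarrow}(\widetilde{\sigma}) = 0$ for symmetric extendible $\widetilde{\sigma}$, apply (\ref{continuity2}) with $\widetilde{\rho} = \widetilde{\sigma}$, and read off the bound.
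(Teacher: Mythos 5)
Your proposal is correct and follows essentially the same route as the paper, which derives the corollary in one step from the vanishing of the one-way key on symmetric extendible states together with the continuity bound (\ref{continuity2}) applied to $\rho$ and the nearest $\widetilde{\sigma}$. Your additional remarks --- justifying $K^{(1)}_{\rightarrow}(\widetilde{\sigma})=0$ via superadditivity and nonnegativity, and handling the attainment of the infimum over the closed set $\Omega$ --- merely make explicit details the paper leaves implicit.
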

\textbf{Example 3.}
As an example of application of the above corollary we will consider two states very close to one another
in sense of a trace norm $\parallel\cdot\parallel_{1}$ from which one is symmetric extendible and the another is non-symmetric extendible. This
shows that for one-copy applications the theorem can be used operationally to estimate one-way secret key rate
of quantum states. Following results of \cite{MNPH}, let us consider two arbitrary instances of a state on $\cal{H}_{AB}\cong \cal{C}^{d}\otimes\cal{C}^{d}$:
\begin{equation}
\Upsilon (\epsilon)=[\frac{d}{2d-1}+\epsilon/2]P_{+}+[\frac{1}{2d-1}-\frac{\epsilon}{2(d-1)}]\sum^{d-1}_{i=1}|i\;0\rangle\langle
i\;0|
\end{equation}
which is non-symmetric extendible for $\epsilon>0$. Namely, one can put into the inequality (\ref{continuity2}) two states $\Upsilon (\epsilon=0)$ and $\Upsilon (\epsilon>0)$. Since for all symmetric extendible states $\rho$ there holds:
$K^{(1)}_{\rightarrow}(\rho)=0$, then:
\[K^{(1)}_{\rightarrow}(\Upsilon(\epsilon>0))\leq 8 \epsilon \log d_{A} + 4\eta(1-\epsilon) + 4\eta(\epsilon).\]
where $\epsilon\leq\frac{2(d_{A}-1)}{2d_{A}-1}$.

It is proved \cite{Gvidal} that in any open set of distillable states, all asymptotic entanglement measures $E(\rho)$ are continuous as a function of a single copy of $\rho$, even though they quantify the entanglement properties of $\rho^{\otimes N}$ in the large $N$ limit. \\
However, the aforementioned theorem does not cast any light on the behavior of function
$K_{\rightarrow}(\cdot)$ on the boundary of
a set of all one-way distillable states adjacent to symmetric extendible states just due to the open conjecture \ref{conj4}. Motivated by this
insight we put an open question in the following form for $\epsilon$-neighborhood of symmetric extendible states
having zero one-way secret key rate:

\begin{conj}
For any state $\rho$ on $\cal{H}_{AB}=\cal{H}_{A}\otimes \cal{H}_{B}$ being in distance $\epsilon$ to the nearest symmetric extendible
state $\widetilde{\sigma}$ in sense of a trace norm:
$\epsilon=\inf_{\sigma\in \Omega}\parallel \rho - \widetilde{\sigma} \parallel _{1}$ where
$\Omega$ denotes a convex set of symmetric extendible states on $\cal{H_{AB}}$, there holds:
\begin{equation}
K_{\rightarrow}(\rho)\leq 8 \epsilon \log d_{A} + 4\eta(1-\epsilon) + 4\eta(\epsilon)
\end{equation}

\end{conj}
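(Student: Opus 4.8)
The plan is to reduce the asymptotic statement to the already-proven single-copy estimate (\ref{continuity2}) via the regularization $K_{\rightarrow}(\rho)=\lim_{n}\tfrac1n K^{(1)}_{\rightarrow}(\rho^{\otimes n})$ from Theorem 1. First I would note that if $\widetilde{\sigma}\in\Omega$ is the symmetric extendible state nearest to $\rho$, then $\widetilde{\sigma}^{\otimes n}$ is again symmetric extendible (a symmetric extension of a tensor product is obtained by tensoring the extensions of the factors, and more generally this is the content of the composite-system results around Lemma \ref{additivity}), so $K^{(1)}_{\rightarrow}(\widetilde{\sigma}^{\otimes n})=0$ for every $n$ and the target bound is consistent at $\epsilon=0$. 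Since $\|\rho^{\otimes n}-\widetilde{\sigma}^{\otimes n}\|_{1}\le n\|\rho-\widetilde{\sigma}\|_{1}=n\epsilon$, the distance $\epsilon_{n}:=\inf_{\sigma_{n}\in\Omega_{n}}\|\rho^{\otimes n}-\sigma_{n}\|_{1}$ from $\rho^{\otimes n}$ to the symmetric extendible set on $(\mathcal H_{A}\otimes\mathcal H_{B})^{\otimes n}$ satisfies $\epsilon_{n}\le n\epsilon$.

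Applying the corollary behind (\ref{continuity2}) to the pair $\rho^{\otimes n}$, $\widetilde{\sigma}^{\otimes n}$, with Alice's system now of dimension $d_{A}^{n}$, gives $K^{(1)}_{\rightarrow}(\rho^{\otimes n})\le 8\,\epsilon_{n}\,\log d_{A}^{\,n}+4\eta(1-\epsilon_{n})+4\eta(\epsilon_{n})$. The difficulty is immediate: dividing by $n$ and using only $\epsilon_{n}\le n\epsilon$ together with $\log d_{A}^{\,n}=n\log d_{A}$ leaves a leading term of order $8\,n\,\epsilon\log d_{A}$, which diverges rather than converging to $8\epsilon\log d_{A}$. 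So the naive tensoring of the single-copy estimate is far too lossy, and the whole content of the conjecture lies in replacing it by something dimension-robust.

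The hard part will be to control $\epsilon_{n}$ --- or, better, to control $K^{(1)}_{\rightarrow}(\rho^{\otimes n})$ directly --- well enough that $\tfrac1n K^{(1)}_{\rightarrow}(\rho^{\otimes n})$ stays below the single-copy right-hand side in the limit. Two routes seem plausible. One is to show $\epsilon_{n}$ grows sublinearly, ideally $\epsilon_{n}=o(1/\log n)$: we do know from Lemma \ref{additivity} that $\rho^{\otimes n}$ is never symmetric extendible when $\rho$ is not, so $\epsilon_{n}>0$, but obtaining a nontrivial \emph{upper} growth rate on $\epsilon_{n}$ (a genuine question about how fast a fixed non-extendible state tensors away from $\Omega_{n}$) looks delicate. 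The second route is a typical-subspace / permutation-symmetry argument: project $\rho^{\otimes n}$ onto a frequency-typical subspace, use an exponential de Finetti reduction to write the relevant symmetric state as a near-mixture of almost-product states, apply the Alicki--Fannes-type estimate (\ref{continuity}) fiberwise, and invoke permutation-invariance of the symmetric extendible set; however, even the typical-subspace dimension contributes $\log\sim nS(\rho_{A})$, so what is really needed is a genuinely dimension-free asymptotic-continuity bound for $K_{\rightarrow}$ near $\Omega$. Establishing such a bound is essentially equivalent to ruling out ``locking'' of non-symmetric extendibility for the one-way distillable key, which is exactly the open phenomenon motivating Conjecture \ref{conj4}; I would therefore expect this to be the genuine obstruction, and a positive resolution here would simultaneously answer whether $K_{\rightarrow}(\cdot)$ is continuous on the boundary of the one-way distillable region adjacent to the symmetric extendible states.
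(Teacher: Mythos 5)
You should first note that the statement you were asked to prove is presented in the paper as a \emph{conjecture}: the author explicitly poses it as an open question about the regularized key rate $K_{\rightarrow}(\rho)=\lim_{n}\frac{1}{n}K^{(1)}_{\rightarrow}(\rho^{\otimes n})$ in the $\epsilon$-neighborhood of the symmetric extendible set, and offers no proof. So there is no argument in the paper to compare yours against, and your proposal --- which does not claim to close the gap --- is the honest response. Your diagnosis of why the single-copy bound (\ref{continuity2}) fails to regularize is correct and is exactly the issue: the estimate applied to $\rho^{\otimes n}$ versus $\widetilde{\sigma}^{\otimes n}$ carries both $\log d_{A}^{\,n}=n\log d_{A}$ and a trace distance that you can only bound by $n\epsilon$ via telescoping, so after dividing by $n$ the leading term scales like $n\epsilon\log d_{A}$ and diverges. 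Your auxiliary observations are also sound: $\widetilde{\sigma}^{\otimes n}$ remains symmetric extendible (tensor the extensions), and by Lemma \ref{additivity} the product $\rho^{\otimes n}$ never becomes symmetric extendible when $\rho$ is not, so $\epsilon_{n}>0$ but with no useful upper bound on its growth.

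The genuine gap, then, is the one you name: what is needed is a dimension-free asymptotic-continuity statement for $K_{\rightarrow}$ anchored at the symmetric extendible set, or equivalently a proof that non-symmetric extendibility cannot be ``unlocked'' at a rate that sustains a positive regularized key arbitrarily close to $\Omega$. This is tied to the paper's Conjecture \ref{conj4} and to the cited result that asymptotic measures are continuous only on open sets of distillable states, which says nothing about the boundary adjacent to $\Omega$. Neither of your two proposed routes (sublinear control of $\epsilon_{n}$, or a de Finetti/typical-subspace reduction) is carried out, and the second still contributes a $\log$-dimension factor of order $nS(\rho_{A})$, as you observe. In short: your proposal correctly reproduces the single-copy corollary that the paper \emph{does} prove, correctly identifies that the asymptotic version does not follow from it, and correctly locates the missing ingredient --- but the conjecture remains open, for you and for the paper alike.
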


\section{Conclusions}
The theory of symmetric extendible states being crucial for analysis of one-way distillability and security of quantum states has
still many unsolved problems. In this paper we introduced some new concepts
related to classification of all symmetric extendible states and
analyzed mainly composite systems including also a symmetric extendible part. In section II. we introduced some observations about the general structure of symmetric extendible states. In section III. we analyzed the structure of composite systems where
its part is symmetric extendible and answered a general question of further extendibility of $k$-extendible states. We introduced a new notion of the extendible number of a quantum state that can be used in further studies on characterization of such states.
As presented in the paper, beside analysis of best symmetric extendible decompositions it might be very useful to
analyze a maximal symmetric extendible state that can be achieved by filtering on Alice' side.
Further, there has been a new one-way monotone based on the best symmetric approximation of quantum state introduced in section IV. treating about the symmetric
extendible component embedded in quantum states.

Finally, in section V. we studied also behavior of private key in neighborhood of symmetric extendible states showing that for one-copy a quantum state close to symmetric extendible state can possess only a small amount of private key.
One of the most intriguing open question relates to the conjecture about
one-way distillability of all two-qubit states which are not symmetric extendible. In consequence, that would simplify substantially full characterization
of two-qubit states in terms of their privacy and distillability. In relation to this question we analyzed Werner states
in the domain of non-positive coherent information which would indicate one-way NPT bound entangled features in case the conjecture
is not true.

\section{Acknowledgments}
The author thanks Pawel Horodecki for fruitful discussions and critical comments on this paper. This work is supported by the ERC
grant QOLAPS. Part of this work was performed at the National Quantum Information Center in Gdansk.

\end{document}